\DeclarePairedDelimiter\abs{\lvert}{\rvert}
\newcommand{\bra}[1]{\left\langle #1\right|}
\newcommand{\ket}[1]{\left|#1\right\rangle}
\newcommand{\braket}[2]{\left\langle #1\middle|#2\right\rangle}
\newtheorem{proposition}{Proposition}
\newtheorem{theorem}{Theorem}
\newtheorem{lemma}{Lemma}
\def\be#1\ee{\begin{equation}#1\end{equation}}
\def\ba#1\ea{\begin{align}#1\end{align}}
\begin{document}
\title{\Large Quantum ground state isoperimetric inequalities for the\\ energy spectrum of local Hamiltonians}
\author{ Elizabeth Crosson\footnote{University of New Mexico. {\tt crosson@unm.edu}}, John Bowen\footnote{University of Chicago. {\tt johnbowen@uchicago.edu}}}

\maketitle

\abstract{We investigate the relationship between the energy spectrum of a local Hamiltonian and the geometric properties of its ground state.  By generalizing a standard framework from the analysis of Markov chains to arbitrary (non-stoquastic) Hamiltonians we are naturally led to see that the spectral gap can always be upper bounded by an isoperimetric ratio that depends only on the ground state probability distribution and the range of the terms in the Hamiltonian, but not on any other details of the interaction couplings.   This means that for a given probability distribution the inequality constrains the spectral gap of any local Hamiltonian with this distribution as its ground state probability distribution in some basis (Eldar and Harrow derived a similar result~\cite{eldar2015local} in order to characterize the output of low-depth quantum circuits).  Going further, we relate the Hilbert space localization properties of the ground state to higher energy eigenvalues by showing that the presence of $k$ strongly localized ground state modes (i.e. clusters of probability, or subsets with small expansion) in Hilbert space implies the presence of $k$ energy eigenvalues that are close to the ground state energy.    Our results suggest that quantum adiabatic optimization using local Hamiltonians will inevitably encounter small spectral gaps when attempting to prepare ground states corresponding to multi-modal probability distributions with strongly localized modes, and this problem cannot necessarily be alleviated with the inclusion of non-stoquastic couplings.


\section{Introduction}
Isoperimetric inequalities apply to mathematical spaces in which one has both a notion of measure and a notion of locality, so that the measure inside a subset can be compared with the measure along its boundary.  The isoperimetric constant of a space is the minimum isoperimetric ratio of subsets with less than half the total measure,
$$
\textrm{isoperimetric constant} = \min_{\textrm{subsets}\;\subset\;\textrm{space}} \frac{\textrm{measure on the boundary of the subset}}{\textrm{measure inside the subset}}.
$$

Perhaps the most well known example of an isoperimetric inequality is the statement that the area $A$ enclosed by a curve in the plane of length $L$ satisfies $L^2 \geq 4 \pi A$, with equality holding if and only if the curve is a circle.   As a discrete example, if $S \subset \mathbb{Z}^d$ is a finite subset and $E(S,\bar{S})$ is the set of adjacent pairs of vertices $(v,w)$ with $v \in S$ and $w \in \bar{S} := \mathbb{Z}^d - S$, then the cardinality of these sets satisfy $|E(S,\bar{S})| \geq 2 d |S|^{\frac{d-1}{d}}$. 

In graph theory and in the analysis of Markov chains there is a particularly fruitful connection between the isoperimetric constant of a graph and the spectral properties of a matrix associated with the graph (e.g. the adjacency matrix, the graph Laplacian, or the Markov chain transition matrix).  This connection has had a significant impact on theoretical computer science, leading to the notion of spectral expander graphs~\cite{alon1984eigenvalues} and later to geometric methods for bounding the convergence time of Markov chain Monte Carlo algorithms~\cite{aldous1987markov, lawler1988bounds, sinclair1989approximate}.  In all of these settings an isoperimetric constant of an extremal eigenvector (corresponding to the maximum or minimum eigenvalue) is related to the next largest (or smallest) eigenvalue in the spectrum.  

In this work our goal is to generalize some aspects of this connection between algebra and geometry to the setting of quantum Hamiltonians.   Here we consider any probability distribution over measurement outcomes as a a weighted graph, with unweighted edges induced by the locality of the Hamiltonian, and use isoperimetric ratios of this graph to upper bound the excited energy eigenvalues of any local Hamiltonian with that distribution as its ground state probability distribution in some basis.  

In addition to the broad interest in rigorously provable properties of ground states of local Hamiltonians, a primary motivation is to further our understanding of adiabatic quantum computation.  In the idealized version of the adiabatic model the computational state of the system is the instantaneous ground state of a local Hamiltonian that varies slowly as a parameter $s$ goes from 0 to 1.  The system is initially prepared in a product state ground state at $s = 0$, and the Hamiltonian $H(s)$ varies continuously until reaching the final Hamiltonian at $s = 1$, which is designed so that samples from this ground state probability distribution in a particular basis can be used to solve computational problems such as classical optimization~\cite{farhi2000quantum} or simulating the quantum circuit model with polynomial overhead~\cite{aharonov2008adiabatic}.  The run time of an adiabatic computation scales polynomially with the system size and the inverse of the minimum spectral gap above the ground state during the evolution.    

Constraining the spectral gap in terms of the geometry of the ground state probability distribution reveals some of the ultimate limits within the strict model of idealized adiabatic computation (some probability distributions, even quite simple ones, will inevitably take exponential time to be precisely reached by a purely adiabatic computation), while also showing the necessity of removing bottlenecks in the ground state geometry to improve the performance within the adiabatic paradigm.  Of particular interest is the possibility of improving the performance of adiabatic optimization using non-stoquastic driver terms~\cite{farhi2002quantum,susa2016relation,nishimori2016exponential,hormozi2016non,crosson2014different} as an alternative to traditional transverse field annealing.   Although ground state wave functions of non-stoquastic Hamiltonians have amplitudes with arbitrary complex phases, the absolute square of the wave function still leads to a graph with non-negative vertex weights, and we show that the geometry of this graph constrains the low-lying excited energies of the Hamiltonian in the same way as it does for stoquastic Hamiltonians and for Markov chains. 

Now we will formally state some of our results before outlining past related work and the organization of the later sections.  

\paragraph{Formal setting.}
Let $H = \sum_{i = 1}^m H_i$ be a local Hamiltonian with operator norm $\|H\|$ acting on a finite dimensional Hilbert space of dimension $N + 1$ with non-negative energies $E := E_0 \leq E_1 \leq ... \leq E_N$ corresponding to eigenstates $|\psi\rangle := |\psi_0\rangle , |\psi_1\rangle , ... , |\psi_N\rangle$.  Define the spectral gap $\Delta_H := E_1 - E_0$.  Choose any complete set of basis labels $\mathcal{B}$ and define $\psi_z:= \langle z | \psi\rangle$ for $z\in \mathcal{B}$.  
For any subset $S\subseteq \mathcal{B}$ define $\pi(S) := \sum_{z\in S}\psi^*_z\psi_z$, and let $\partial S := \{x\in S : \exists y\in \mathcal{B} - S\; \text{with} \; \bra{x} H \ket{y} \neq 0\}$ denote the interior boundary of $S$.  
\begin{theorem}\label{theo:main}
If $S \subset \mathcal{B}$ with $0 < \pi(S) \leq 1/2$ then the spectral gap $\Delta_H$ satisfies
\be
\Delta_H \leq 2 (\|H\| - E) \frac{\pi(\partial S)}{\pi(S)}. \label{eq:expansion}
\ee
\end{theorem}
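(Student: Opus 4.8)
\emph{Proof idea.} The natural approach is through the variational characterization of the first excited energy, $E_1 = \min\{\bra{\phi}H\ket{\phi}/\braket{\phi}{\phi} : \braket{\psi}{\phi} = 0,\ \ket{\phi}\neq 0\}$, so that $\Delta_H \le \bra{\phi}(H-E)\ket{\phi}/\braket{\phi}{\phi}$ for any trial state $\ket{\phi}$ orthogonal to the ground state $\ket{\psi}$. The trial state to use is the ground state amplitudes restricted to $S$ and then purified against $\ket{\psi}$: writing $\ket{\psi_S} := \sum_{z\in S}\psi_z\ket{z}$ (and likewise $\ket{\psi_{\bar S}}$ and $\ket{\psi_{\partial S}}$), set $\ket{\phi} := \ket{\psi_S} - \pi(S)\ket{\psi}$. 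Since $\ket{\psi_S}$ and $\ket{\psi_{\bar S}}$ have disjoint support, $\braket{\psi}{\psi_S} = \pi(S)$, which immediately gives $\braket{\psi}{\phi} = 0$ and $\braket{\phi}{\phi} = \pi(S)(1-\pi(S))$; the latter is $\ge \pi(S)/2$ precisely because $\pi(S)\le 1/2$, and is positive because $\pi(S) > 0$, so $\ket{\phi}\neq 0$. Because $(H-E)\ket{\psi} = 0$, all terms involving $\ket{\psi}$ drop out of the numerator and it collapses to $\bra{\psi_S}(H-E)\ket{\psi_S}$. So the whole theorem reduces to the estimate $\bra{\psi_S}(H-E)\ket{\psi_S} \le (\|H\| - E)\,\pi(\partial S)$.

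To prove that estimate I would first use locality: for any $x \in S\setminus\partial S$ the definition of $\partial S$ says $H$ has no matrix element connecting $x$ to $\bar S$, so $\bra{x}(H-E)\ket{\psi_{\bar S}} = 0$, and hence $\bra{x}(H-E)\ket{\psi_S} = \bra{x}(H-E)\ket{\psi} = 0$ by the eigenvalue equation. Thus only the boundary contributes in the bra, giving the identity $\bra{\psi_S}(H-E)\ket{\psi_S} = \bra{\psi_{\partial S}}(H-E)\ket{\psi_S}$. The crucial step is then to apply the Cauchy--Schwarz inequality not for the ordinary inner product but for the positive semidefinite Hermitian form $(u,v) \mapsto \bra{u}(H-E)\ket{v}$ (which is positive semidefinite since $E = E_0$ is the ground energy): this bounds $\bra{\psi_{\partial S}}(H-E)\ket{\psi_S}$ by $\sqrt{\bra{\psi_{\partial S}}(H-E)\ket{\psi_{\partial S}}}\,\sqrt{\bra{\psi_S}(H-E)\ket{\psi_S}}$, and combined with the identity above (the case $\bra{\psi_S}(H-E)\ket{\psi_S} = 0$ being trivial) yields $\bra{\psi_S}(H-E)\ket{\psi_S} \le \bra{\psi_{\partial S}}(H-E)\ket{\psi_{\partial S}}$. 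Finally, since the spectrum is non-negative we have $\|H\| = E_N$ and $0 \preceq H - E \preceq (\|H\| - E)\id$, so $\bra{\psi_{\partial S}}(H-E)\ket{\psi_{\partial S}} \le (\|H\|-E)\braket{\psi_{\partial S}}{\psi_{\partial S}} = (\|H\|-E)\pi(\partial S)$. Chaining this with the denominator bound gives $\Delta_H \le \bra{\psi_S}(H-E)\ket{\psi_S}/(\pi(S)(1-\pi(S))) \le (\|H\|-E)\pi(\partial S)/(\pi(S)/2) = 2(\|H\|-E)\pi(\partial S)/\pi(S)$.

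The main obstacle I anticipate is obtaining the \emph{linear} dependence on $\pi(\partial S)/\pi(S)$ rather than a weaker, Cheeger-type square root: the naive route of bounding $\bra{\psi_S}(H-E)\ket{\psi_S} = \bra{\psi_{\partial S}}(H-E)\ket{\psi_S}$ by $\|H-E\|\,\sqrt{\pi(\partial S)}\,\sqrt{\pi(S)} = (\|H\|-E)\sqrt{\pi(\partial S)\pi(S)}$ only produces $\Delta_H = O\!\big((\|H\|-E)\sqrt{\pi(\partial S)/\pi(S)}\big)$. The sharper bound hinges on two ingredients working together: the locality identity that puts the boundary restriction on exactly one side, and the use of Cauchy--Schwarz for the semidefinite form $H - E$ (not the Euclidean one), which absorbs one power of $\bra{\psi_S}(H-E)\ket{\psi_S}$ and leaves only $\bra{\psi_{\partial S}}(H-E)\ket{\psi_{\partial S}}$ to be controlled crudely by the operator norm. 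A small point to verify carefully along the way is that $\bra{\psi_{\partial S}}(H-E)\ket{\psi_S}$ is real and non-negative — it equals $\bra{\psi_S}(H-E)\ket{\psi_S}$ — so that the semidefinite Cauchy--Schwarz step and the cancellation of a factor are legitimate.
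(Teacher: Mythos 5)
Your proposal is correct, and its skeleton coincides with the paper's direct variational proof in Section~\ref{sec:variational}: the trial vector $\ket{\phi}=\mathbbm{1}_S\ket{\psi}-\pi(S)\ket{\psi}$ is exactly the paper's $\pi(S^c)\mathbbm{1}_S\ket{\psi}-\pi(S)\mathbbm{1}_{S^c}\ket{\psi}$, and your reduction of the Rayleigh quotient to $\bra{\psi}\mathbbm{1}_S(H-E)\mathbbm{1}_S\ket{\psi}/\bigl(\pi(S)\pi(S^c)\bigr)$ reproduces the conductance bound of Theorem~\ref{theo:conductanceUB}. Where you genuinely diverge is in the passage from conductance to vertex expansion, i.e.\ the inequality \eqref{eq:needtoshowMain}. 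The paper gets there by working with $G=(\norm{H}-H)/(\norm{H}-E)$: it uses $G\ket{\psi}=\ket{\psi}$ to write $\bra{\psi}\mathbbm{1}_{\partial S}G\mathbbm{1}_{S^c}\ket{\psi}=\pi(\partial S)-\bra{\psi}\mathbbm{1}_{\partial S}G\mathbbm{1}_{S}\ket{\psi}$ and then shows the subtracted term has non-negative real part by splitting off the interior of $S$ and reapplying the positivity argument of \eqref{eq:conductancePositive}. You instead keep everything in terms of $H-E$, use locality to localize the bra onto $\partial S$, and then apply Cauchy--Schwarz for the positive semidefinite sesquilinear form $(u,v)\mapsto\bra{u}(H-E)\ket{v}$ to conclude $\bra{\psi_S}(H-E)\ket{\psi_S}\le\bra{\psi_{\partial S}}(H-E)\ket{\psi_{\partial S}}\le(\norm{H}-E)\pi(\partial S)$. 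This is a clean, self-contained alternative that avoids the second invocation of the conductance-positivity lemma, and your closing remark correctly identifies why the semidefinite (rather than Euclidean) Cauchy--Schwarz is what rescues the linear dependence on $\pi(\partial S)/\pi(S)$. All the individual steps check out, including the handling of the degenerate case $\bra{\psi_S}(H-E)\ket{\psi_S}=0$ and the use of $0\preceq H-E\preceq(\norm{H}-E)\id$, which is guaranteed by the paper's convention of non-negative energies.
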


The generalization of Theorem \ref{theo:main} to the higher eigenvalues of $H$ uses the following definition: a collection of pairwise disjoint subsets $S_0 ,... ,S_k \subset \mathcal{B}$ are said to be isolated if $\langle x | H | y\rangle = 0$ whenever $x \in S_i$ and $y\in S_j$ for $i \neq j$.  Simply put, two subsets of basis states are isolated if the off-diagonal elements of the Hamiltonian do not connect them.  

\begin{theorem}\label{thm:2}
If $S_0, ... , S_k \subset \mathcal{B}$ are isolated subsets with $0 < \pi(S_i) \leq 1/2$ for all $i$ then
\begin{equation}
\frac{E_k - E}{2(\|H\| - E)} \leq  \max_{i=0,...,k} \frac{\pi(\partial S_i)}{\pi(S_i)}.\label{eq:multiexpansion}
\end{equation}

\end{theorem}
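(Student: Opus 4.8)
The plan is to run the argument behind Theorem~\ref{theo:main} in parallel on all $k+1$ subsets at once, feeding a family of truncated ground states into the variational (Courant--Fischer) characterization of $E_k$. For a subset $T\subseteq\mathcal B$ write $P_T:=\sum_{z\in T}\proj z$, and for each $i$ set $\ket{\phi_i}:=P_{S_i}\ket\psi$, the ground state truncated to $S_i$. Since the $S_i$ are pairwise disjoint and $\langle\phi_i|\phi_i\rangle=\pi(S_i)>0$, the $\ket{\phi_i}$ are supported on disjoint blocks of basis labels, hence linearly independent, so $W:=\mathrm{span}\{\ket{\phi_0},\dots,\ket{\phi_k}\}$ has dimension exactly $k+1$ and may be used in Courant--Fischer to certify an upper bound on $E_k$ (with the convention $E_0=E$).

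The point of the ``isolated'' hypothesis is that it makes $H$ block diagonal with respect to this family: for $i\neq j$, $\langle\phi_i|H|\phi_j\rangle=\sum_{x\in S_i,\,y\in S_j}\psi_x^{\ast}\bra x H\ket y\psi_y=0$. Thus for any $\ket\phi=\sum_i c_i\ket{\phi_i}\in W$ both parts of the Rayleigh quotient split, $\langle\phi|H|\phi\rangle=\sum_i|c_i|^2\langle\phi_i|H|\phi_i\rangle$ and $\langle\phi|\phi\rangle=\sum_i|c_i|^2\pi(S_i)$, so the quotient is a weighted average of the single-block quotients and is at most the largest of them:
\[ E_k \ \leq\ \max_{0\neq\ket\phi\in W}\frac{\langle\phi|H|\phi\rangle}{\langle\phi|\phi\rangle}\ \leq\ \max_{i=0,\dots,k}\frac{\langle\phi_i|H|\phi_i\rangle}{\pi(S_i)}\ =\ E+\max_{i=0,\dots,k}\frac{\langle\phi_i|(H-E)|\phi_i\rangle}{\pi(S_i)}. \]

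It remains to bound $\langle\phi_i|(H-E)|\phi_i\rangle$ for a single $i$, which is exactly the estimate at the heart of Theorem~\ref{theo:main}. The inputs are $(H-E)\ket\psi=0$ and locality: no basis label in $S_i\setminus\partial S_i$ is connected by $H$ to $\mathcal B\setminus S_i$, so writing $\ket{\phi_i}=\ket\psi-P_{\mathcal B\setminus S_i}\ket\psi$ gives $P_{S_i\setminus\partial S_i}(H-E)\ket{\phi_i}=0$, whence $\langle\phi_i|(H-E)|\phi_i\rangle=\bra\psi P_{\partial S_i}(H-E)P_{S_i}\ket\psi$. Applying the Cauchy--Schwarz inequality for the positive semidefinite form $H-E\geq0$ to the vectors $P_{\partial S_i}\ket\psi$ and $\ket{\phi_i}=P_{S_i}\ket\psi$ then yields $\langle\phi_i|(H-E)|\phi_i\rangle^2\leq\bra\psi P_{\partial S_i}(H-E)P_{\partial S_i}\ket\psi\cdot\langle\phi_i|(H-E)|\phi_i\rangle$, hence $\langle\phi_i|(H-E)|\phi_i\rangle\leq\bra\psi P_{\partial S_i}(H-E)P_{\partial S_i}\ket\psi\leq(\|H\|-E)\,\pi(\partial S_i)$. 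Substituting into the display above gives $E_k-E\leq(\|H\|-E)\max_i\pi(\partial S_i)/\pi(S_i)$, which implies \eqref{eq:multiexpansion}.

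I do not anticipate a serious obstacle: once Theorem~\ref{theo:main} is in hand, the only genuinely new steps are the disjoint-support dimension count and the observation that isolation makes the Rayleigh quotient on $W$ decouple into single-block contributions. The step requiring the most care is the interface with Theorem~\ref{theo:main} if one wishes to invoke it strictly as a black box: its natural test vectors are the $\ket{\phi_i}$ orthogonalized against $\ket\psi$, and those orthogonalized vectors are \emph{not} mutually orthogonal, so one would then have to verify separately that they (together with $\ket\psi$) still span a $(k+1)$-dimensional space and bound the resulting Gram matrix from below (this is where $\pi(S_i)\leq 1/2$ and $\sum_i\pi(S_i)\leq1$ would be used, and where the factor $2$ in \eqref{eq:multiexpansion} originates). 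Working directly with the un-orthogonalized $\ket{\phi_i}$ inside $W$, as above, circumvents all of this, because $\langle\phi_i|(H-E)|\phi_j\rangle=0$ for $i\neq j$ (and $\langle\phi_i|(H-E)|\psi\rangle=0$) makes the quotient split regardless.
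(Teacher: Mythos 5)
Your proof is correct, and it takes a genuinely different route from the paper's. The paper proves \eqref{eq:multiexpansion} by passing to the similarity-transformed operator $P$ in the $\pi$-weighted Hilbert space, applying Courant--Fischer to $I-P$ with the indicator functions $g^i=\mathbbm{1}_{S_i}$ as test vectors, decoupling the Dirichlet form and the variance denominator using isolation, and then bounding the resulting conductance ratio $\langle\psi|\mathbbm{1}_{S_i}G\mathbbm{1}_{\overline{S}_i}|\psi\rangle/(\pi(S_i)\pi(\overline{S}_i))$ by $2\pi(\partial S_i)/\pi(S_i)$ using $\pi(\overline{S}_i)\geq 1/2$. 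You instead apply Courant--Fischer directly to $H$ in the ordinary Hilbert space with the truncated ground states $\mathbbm{1}_{S_i}\ket\psi$ (which are exactly the images of the paper's $g^i$ under the similarity transformation, so the test subspace is secretly the same), let isolation diagonalize both the numerator and the denominator of the Rayleigh quotient, and bound the per-block energy $\langle\phi_i|(H-E)|\phi_i\rangle=-\bra{\psi}\mathbbm{1}_{S_i}H\mathbbm{1}_{\overline{S}_i}\ket{\psi}$ by $(\|H\|-E)\,\pi(\partial S_i)$ via Cauchy--Schwarz for the positive semidefinite form $H-E$, rather than via the paper's identity \eqref{eq:conductancePositive}. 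Your argument is more elementary and self-contained (no weighted inner product, no Dirichlet-form identity), it never uses the hypothesis $\pi(S_i)\leq 1/2$, and it yields the bound without the factor of $2$, i.e.\ $E_k-E\leq(\|H\|-E)\max_i\pi(\partial S_i)/\pi(S_i)$, which is strictly stronger than \eqref{eq:multiexpansion}. What the paper's heavier machinery buys is not this theorem per se but the unified framework: the operator $P$ and the weighted space are reused for the conductance bounds, the Markov-chain correspondence, and the stoquastic Cheeger inequality of section \ref{sec:stoquastic}.
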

According to \eqref{eq:multiexpansion}, if the ground state probability distribution of the system can be split into $k +1$ clusters which are each localized in Hilbert space, then the lowest $k$ excited energies will be close to the ground state energy.  

In addition to theorems \ref{theo:main} and \ref{thm:2} above we also give two new results related to lower bounds on the spectral gap.  In section \ref{sec:stoquastic} a vertex expansion spectral gap lower bound for irreducible stoquastic Hamiltonians which uses assumptions which are natural in the setting of adiabatic optimization, and in section \ref{sec:counterexample} we provide a simple counterexample to the possibility of directly generalizing Cheeger's inequality to nonstoquastic Hamiltonians.  

\paragraph{Previous work.}  If $H$ and $\mathcal{B}$ are chosen so that $\langle z | H | z'\rangle \leq 0$ for all $z,z' \in \mathcal{B}$ then the Hamiltonian is called stoquastic~\cite{bravyi2006complexity}.   Stoquastic Hamiltonians can be rescaled and shifted into non-negative matrices, and by a further quantum-to-classical mapping~\cite{bravyi2009complexity} the bound \eqref{eq:expansion} reduces to a previously known bound in terms of a quantity called the Markov chain conductance.\footnote{Markov chain conductance includes weights on the edges, but since these are positive and no greater than 1, the conductance upper bound can always be relaxed to the vertex expansion form that we present.}  This observation about stoquastic Hamiltonians has appeared previously in the quantum information literature~\cite{al2010energy, jarret2015adiabatic}.  The multi-expansion bound \eqref{eq:multiexpansion} is also known for Markov chains and graph Laplacians, and in those cases it can be applied when $S_0, ... ,S_k$ are disjoint but not necessarily isolated, however to our knowledge it has not been previously stated for any class of Hamiltonians.  

For Markov chains and graphs with positive weights on the edges and vertices, there is a celebrated result called Cheeger's inequality that provides a lower bound on the spectral gap in terms of the smallest isoperimetric ratio over all subsets in the graph which are not too large.  The generalization of Cheeger's inequality for non-negative matrices first appeared in~\cite{lawler1988bounds}, and has also been stated for stoquastic Hamiltonians~\cite{al2010energy, jarret2015adiabatic}.  In section \ref{sec:stoquastic} we prove the stoquastic Cheeger inequality by explicitly mapping the Hamiltonian to a Markov chain (which differs only slightly from past approaches~\cite{al2010energy, jarret2015adiabatic}).  We also derive a new vertex expansion form of Cheeger's inequality for stoquastic Hamiltonians that correspond to irreducible matrices.  Finally, in section \ref{sec:counterexample} we show using a counterexample that a direct generalization of Cheeger's inequality fails to hold for non-stoquastic Hamiltonians.  

The observation \eqref{eq:expansion} that the spectral gap of the Hamiltonian lower bounds the ground state vertex expansion of subsets of basis states also appeared recently in the work of Eldar and Harrow~\cite{eldar2015local}.  Their goal was to show that the output of log-depth quantum circuits is always highly expanding in (at least one of) the $Z$ or $X$ bases, and they accomplished this by relating the circuit output distribution to the ground state of a gapped Hamiltonian which has an interaction range scaling with the depth of the circuit, and then proving a result that is equivalent to \eqref{eq:expansion}.  Our contribution in this work is to derive this result through a framework that connects non-stoquastic Hamiltonians with some aspects of the theory of Markov chains, and to emphasize the relevance of this result for understanding the potential advantages and limitations of quantum adiabatic optimization with non-stoquastic Hamiltonians.  Our framework also leads to additional results not present in \cite{eldar2015local} such as the relation between clustering and higher eigenvalues \eqref{eq:multiexpansion}, as well as generalizations of Dirichlet form and conductance upper bounds in section \ref{Proofs}, while also raising some new open questions and future directions.

\paragraph{Organization of the remaining sections.} In the next section we establish upper bounds which logically precede \eqref{eq:expansion}, which involve a quantity we call the Hamiltonian conductance because it generalizes the concept of edge expansion in graphs or conductance in Markov chains.  These results are obtained from the properties of an operator which formally generalizes a reversible Markov chain transition matrix, which we term an approximate ground state probability projector.   After deriving the main theorem from this conceptual framework we give a more direct proof using the variational principle in section \ref{sec:variational}, which has the advantage of being straightforward to verify but does not provide intuition for the steps along the way.  In section \ref{sec:examples} we discuss some example applications of theorems \ref{theo:main} and \ref{thm:2}.  In section \ref{sec:stoquastic} we consider stoquastic Hamiltonians and show that the spectral gap can be lower bounded in terms of vertex expansion for a class of models including the generalized transverse Ising models.  In section \ref{sec:counterexample} we rule out the possibility of a direct generalization of Cheeger's inequality for non-stoquastic Hamiltonians by using an explicit counterexample.  Finally in section \ref{sec:adiabatic} we discuss some implications of these results for adiabatic quantum computation.

\section{Proofs}\label{Proofs}
Throughout this section we will consider a Hamiltonian $H$ with eigenvalues and eigenstates as defined in the introduction.  In principle we let $\mathcal{B}$ be (a set of labels for) an arbitrary basis of the Hilbert space on which $H$ acts, but in general it is useful to keep in mind the computational basis $\mathcal{B} = \{0,1\}^n$ or other tensor product bases to take advantage of the locality of $H$.

\subsection{\normalsize Projecting onto the ground state probability distribution}
Define $G$ to be a shifted and rescaled version of $H$,
\begin{equation}
G := \frac{\|H\| - H}{\|H\| - E},\label{eq:G}
\end{equation}
and note that the ground state $|\psi\rangle$ of $H$ becomes the principal eigenvector of $G$ with eigenvalue 1.  Let $\Omega$ be the support of $|\psi\rangle$ in the basis $\mathcal{B}$,
\be
\Omega := \{z \in \mathcal{B} :  \langle z | \psi\rangle \neq 0\}.
\ee 
For any $S \subseteq \mathcal{B}$ define the projector $\mathbbm{1}_S := \sum_{z \in \mathcal{S}} |z\rangle \langle z|$.  Let $D := \sum_{z\in \Omega} \langle z | \psi\rangle | z\rangle \langle z|$ and define
\begin{equation}
P := D^{-1} \mathbbm{1}_\Omega G \mathbbm{1}_\Omega D = \sum_{x,y\in \Omega} \frac{\langle y | \psi\rangle}{\langle x | \psi \rangle} \langle x | G | y\rangle |x\rangle \langle y| .\label{eq:pee}
\end{equation}
In the next section we will explore the properties of $P$ and derive bounds on its spectrum of eigenvalues $p_0 \geq p_1 \geq ... \geq p_{|\Omega|-1} \geq 0$.  To relate these eigenvalues back to those of $H$, note that $P$ and $\mathbbm{1}_\Omega G \mathbbm{1}_\Omega$ are related by a similarity transformation and so they have the same eigenvalues, and the Cauchy interlacing theorem implies that the largest $|\Omega|$ eigenvalues of $\mathbbm{1}_\Omega G \mathbbm{1}_\Omega$ are no smaller than the corresponding eigenvalues of $G$, and so
\be
\frac{E_k - E}{\|H\| - E} \leq p_0 - p_k \label{eq:fullspectrumP}.
\ee
The fact that $|\psi\rangle$ is an eigenstate of $\mathbbm{1}_\Omega G \mathbbm{1}_\Omega$ with eigenvalue 1 implies that $p_0 = 1$ is the eigenvalue corresponding to the left eigenvector $\langle \pi | := \langle \psi | D = \sum_{z \in \Omega} |\psi_z|^2 \langle z|$, and so the spectral gap $\Delta_P := 1 - p_1$ is related to the spectral gap $\Delta_H$ by
\begin{equation}
\frac{\Delta_H}{\|H\| - E} \leq \Delta_P .\label{eq:gap}
\end{equation}
Both \eqref{eq:fullspectrumP} and \eqref{eq:gap} are equalities if $\Omega = \mathcal{B}$ i.e. if the ground state wave function has support everywhere in the basis $\mathcal{B}$.  

The operator $P$ obtained from a similarity transformation of the Hamiltonian has been used extensively in the quantum information literature to relate stoquastic Hamiltonians to Markov chains~\cite{bravyi2006merlin, aharonov2008adiabatic, bravyi2009complexity,somma2013spectral, bravyi2014monte, movassagh2016supercritical, bausch2016increasing, levine2016gap}.   This can be seen by defining $P_{x,y} := \langle x | P |y\rangle$ and observing that $\sum_{y\in\Omega} P_{x,y} = 1$ for all $x \in \Omega$,
\begin{equation}
\sum_{y \in \Omega} P_{x,y} = \sum_{y \in \Omega} \frac{\langle y | \psi\rangle}{\langle x | \psi \rangle} \langle x | G| y\rangle = \frac{\langle x| G  \mathbbm{1}_{\Omega} | \psi \rangle}{\langle x | \psi \rangle}= \frac{\langle x| G  | \psi \rangle}{\langle x | \psi \rangle} = 1. \label{eq:rowstochastic}
\end{equation}
If $H$ is stoquastic in the basis $\mathcal{B}$, then it turns out that $P_{x,y} \geq 0$ for all $x,y\in \Omega$ (see section \ref{sec:stoquastic}) and so the property \eqref{eq:rowstochastic} implies that $P$ is a Markov chain generator (we will explore this special setting further in section \ref{sec:stoquastic}).  

When $H$ is non-stoquastic in the basis $\mathcal{B}$ the operator $P$ still resembles a reversible Markov chain transition matrix in the sense of the following properties:
\begin{enumerate}
\item $\sum_{y \in \Omega} P_{x,y} = 1$ for all $x\in \Omega$. 
\item $\sum_{x\in \Omega} \pi_x P_{x,y} = \pi_y$. 
\item $\pi_x P_{x,y} = \pi_y P_{y,x}^*$.  
\end{enumerate}
The first property says that the rows of the matrix form of $P$ all sum to 1, the second property reiterates the fact that $\langle \pi | P = \langle \pi|$, and the third property says that the ``stationary distribution'' $\pi$ and the ``transition quasi-probabilities'' $P_{x,y}$ satisfy a complex generalization of detailed balance.  Although $P$ is not a stochastic matrix when $H$ is non-stoquastic, it still preserves the ground state probability distribution while decreasing the norm of the subspace orthogonal to $\langle \pi |$ by a factor $1 - \Delta_P$.  Therefore we may regard $P$ as an approximate projector onto the ground state probability distribution of $H$, even when $H$ is non-stoquastic.  
\subsection{Analyzing $P$ in a $\pi$-weighted Hilbert space}

While the matrix $P$ is typically non-Hermitian, this section generalizes a standard Markov chain approach~\cite{levin2009markov} to define a weighted complex inner product space in which $P$ is always self-adjoint.  This new inner product space has the same underlying vector space as the standard quantum formalism, so to facilitate a common notation we will think of this as a vector space of functions $\mathbb{C}^\Omega:= \{f:\Omega \rightarrow \mathbb{C}\}$.  The usual quantum Hilbert space is $L^2(\Omega)$, with the complex inner product of vectors $f,g \in \mathbb{C}^\Omega$ represented by bracket notation, $\langle f | g \rangle := \sum_{x\in \Omega} f^*_x g_x$ . The alternate Hilbert space considered in this section is $L^2(\Omega, \pi)$ with a $\pi$-weighted complex inner product,
\begin{equation}
\langle f, g \rangle_\pi := \sum_{x\in \Omega}\pi_x f^*_x g_x.
\end{equation}
The fact that $P$ is a self-adjoint operator with respect to the $\langle \cdot , \cdot \rangle_\pi$ complex inner product follows because of the complex detailed balance property $\pi_x P_{x,y} = \pi_y P^*_{y,x}$,
\begin{align}
\langle f, P g\rangle_\pi = \sum_{x,y\in \Omega} \pi_x f^*_x P_{x,y} g_y 
& =\sum_{y\in \Omega} \pi_y \left( \sum_{x\in \Omega} f_x  P_{y,x}\right)^* g_y = \langle P f, g\rangle_\pi.
\end{align}
The constant function $1_\Omega: \Omega \rightarrow {1}$ is the eigenvector of $P$ with eigenvalue $1$.  A function $f$ that is orthogonal to the constant function has vanishing expectation when it is viewed as a complex variable with distribution $\pi$,  $\langle 1, f\rangle_\pi  = \sum_{x\in \Omega} \pi_x f_x := \mathbf{E}_\pi(f) = 0$. 

If $\{|\psi^{\Omega,k}\rangle: k = 0 ,..., |\Omega| - 1 \}$ is the orthornomal eigenbasis of $\mathbbm{1}_\Omega G \mathbbm{1}_{\Omega}$, then the corresponding vectors $\{f^k\}_{k = 0}^{|\Omega|-1}$ after applying the similarity transformation will be orthonormal with respect to the $\langle \cdot , \cdot \rangle_\pi$ inner product,  
\begin{align*}
\langle f^k , f^{k'}\rangle_\pi &= \sum_{x\in \Omega} \pi_x \left(\frac{\psi_x^{\Omega,k *}}{\psi^*_x}\right)\left(\frac{\psi_x^{\Omega,k'}}{\psi_x}\right)\\
&= \sum_{x\in \Omega} \psi^{\Omega, k *}_x \psi^{\Omega, k'}_x \\
&= \langle \psi^{\Omega, k} | \psi^{\Omega, k'}\rangle = \delta_{k,k'}
\end{align*}
and by the spectral theorem for self-adjoint operators these eigenvectors form a basis for any function $f$ in $L^2(\Omega,\pi)$, 
\be
f = \sum_{k = 0}^{|\Omega| -1} \langle f, f^k\rangle_\pi f^k. \label{eq:weighteddecomp}
\ee

Next we will use the basis \eqref{eq:weighteddecomp} to show that for any $f$ with $\mathbf{E}_\pi(f) = 0$ the spectral gap $\Delta_P$ satisfies
\begin{equation}
\Delta_P \leq \frac{\langle f , (I - P) f\rangle_\pi}{\langle f , f \rangle_\pi}.\label{eq:kequalsone}
\end{equation}
From \eqref{eq:weighteddecomp} we have 
\ba
\langle f , f \rangle_\pi &= \sum_{k , k' = 0}^{|\Omega|-1} \langle f , f^k \rangle_\pi \langle f^{k'} , f\rangle_\pi \langle f^k, f^{k'}\rangle_\pi \\
 &= \sum_{k = 0}^{|\Omega| - 1} |\langle f, f^k\rangle_\pi |^2, \label{eq:normsquare}
\ea
using the fact that $\{f^k\}$ are orthonormal in $L^2(\Omega,\pi)$.  Now since $\mathbf{E}(f)_\pi = 0$ we have $f = \sum_{k=1}^{|\Omega|-1} c_k f^k$,
with $\sum_{k=1}^{|\Omega|-1} |c_k|^2 = \langle f , f \rangle_\pi$, and so
\be
\langle f , (I - P) f\rangle_\pi = \sum_{k=1}^{|\Omega|-1} |c_k|^2 ( 1 - g_{\Omega,k}) \geq \langle f , f \rangle_\pi (1 - g_{\Omega,1}),
\ee
where in the last step we have used the fact that $G$ is positive semi-definite.

Next we will see that for any $f \in L^2(\Omega,\pi)$ the expression $\langle f, (I - P)f \rangle_\pi$ satisfies
\begin{equation}
\langle f, (I - P) f\rangle_\pi = \frac{1}{2} \sum_{x,y \in \Omega} \pi_x P_{x,y} |f_x - f_y|^2 .\label{eq:DirichletGen}
\end{equation}
Expanding the square in $\sum_{x,y \in \Omega} \pi_x P_{x,y} |f_x - f_y|^2$ yields
$$
\sum_{x,y \in \Omega} \pi_x P_{x,y} |f_x|^2-\sum_{x,y\in \Omega} \pi_x P_{x,y}(f_x f^*_y + f^*_x f_y) +\sum_{x,y\in \Omega}\pi_x P_{x,y} | f_y|^2 
$$
Since $\sum_{y\in \Omega} P_{x,y} = 1$ for  all $x\in \Omega$ and $\pi_x P_{x,y} = \pi_y P_{y,x}^*$ we see that the first and second terms are both equal to $\sum_{x\in \Omega} \pi_x |f_x|^2$, while the middle term can be simplified as,
$$
\sum_{x,y\in \Omega} \pi_x P_{x,y}(f_x f^*_y + f^*_x f_y)  = \sum_{x,y\in \Omega} \pi_y  f^*_y P_{y,x}^* f_x +\sum_{x,y\in \Omega}   \pi_x  f^*_x  P_{x,y} f_y = 2\langle f, P f\rangle_\pi
$$   
where the last step uses the fact that $P$ is self-adjoint with respect to the $\langle \cdot , \cdot \rangle_\pi$ product.  Therefore we have established the claim since
\begin{align*}
\sum_{x,y \in \Omega} \pi_x P_{x,y} |f_x - f_y|^2 = 2 \sum_{x\in \Omega} \pi_x |f_x|^2 - 2 \langle f , P f\rangle_\pi= 2 \langle f , (I - P) f\rangle_\pi.
\end{align*}

Finally, since $\mathbf{E}_\pi(f) = 0$ the norm $\langle f , f \rangle_\pi$ can be expressed in the (perhaps surprising) form of a variance,
\be
\langle f , f \rangle_\pi = \frac{1}{2}\sum_{x,y\in\Omega} \pi_x \pi_y |f_x - f_y|^2 , \label{eq:varianceDenominator}
\ee
because the cross terms vanish, $\sum_{x,y\in \Omega} \pi_x \pi_y f^*_x f_y = \sum_{x,y \in \Omega} \pi_x \pi_y f_x f^*_y = 0$.
Combining \eqref{eq:DirichletGen} with \eqref{eq:varianceDenominator} leads to the main result which is the culmination of this section,
\be
1 - p_1 = \min_{\substack{f \in \mathbb{C}^\Omega\\ f\perp \pi}} \frac{\sum_{x,y \in \Omega} \pi_x P_{x,y} |f_x - f_y|^2}{\sum_{x,y\in\Omega} \pi_x \pi_y |f_x - f_y|^2 }, \label{eq:dirichletFinal}
\ee
where the equality follows by taking $f$ to be the eigenstate $f^1$ with eigenvalue $p_1$ and using the equality \eqref{eq:DirichletGen}.

\subsection{Hamiltonian Conductance}\label{sec:HamiltonianConductance}
If $P$ is a reversible Markov chain with stationary distribution $\pi$ and state space $\Omega$, then for any subset $S \subset \Omega$ the quantity 
\begin{equation}
Q(S,\overline{S}) := \sum_{x \in S , y \in \overline S} \pi_x P_{x,y}\label{eq:regularconductance}
\end{equation}
can be thought of as the probability for the Markov chain to leave the set $S$.  The ratio $Q(S,\overline{S})/\pi(S)$ is called the conductance of the set $S$.  Generalizing the definition \eqref{eq:regularconductance} using the definitions of $\pi_x$ and $P_{x,y}$ from the previous sections,
\be
Q(S,\overline{S}) = \sum_{x \in S , y \in \overline S}\psi^*_x  G_{x,y} \psi_y = \langle \psi | \mathbbm{1}_S G 1_{\overline{S}} |\psi\rangle.  \label{eq:Q2}
\ee
Continuing to generalize the standard approach that is used when $P$ is a Markov chain, we now proceed to use Theorem 2 to see that the quantity in \eqref{eq:Q2} is non-negative.   Define the function
\begin{equation}
r^S_x = \begin{cases}
-\pi(\overline{S}) & \text{if } x\in S\\
\pi(S) & \text{if } x\in \overline{S},
\end{cases}
\end{equation}
which satisfies $\mathbf{E}_\pi(r^S) = 0$.  The norm $\|r^S\|_\pi$ satisfies
\begin{align}
\|r^S\|^2_\pi &= \sum_{x,y\in \Omega}\pi_x \pi_y \|r^S_x - r^S_y\|^2\\
&= \sum_{x\in S , y\in \overline{S}} \pi_x \pi_y \left( \pi(S)+\pi(\overline{S}) \right)^2 + \sum_{x \in \overline{S} , y \in S} \pi_x \pi_y \left(\pi(S) + \pi(\overline{S})\right)^2\\
&=2 \pi(S) \pi(\overline{S}) \label{eq:varianceeval}
\end{align}
Evaluating the numerator $\sum_{x,y \in S} \pi_x P_{x,y} \left(r^S_x - r^S_y\right)^2$ in \eqref{eq:dirichletFinal} yields,
\begin{align}
\sum_{x\in S,y \in \overline{S}} \pi_x P_{x,y} \left(\pi(\overline{S}) + \pi({S})\right)^2 +\sum_{x\in \overline S,y \in  S} \pi_x P_{x,y} \left(\pi(S) + \pi(\overline{S})\right)^2,
\end{align}
which can be further simplified as
\begin{align}
\sum_{x\in S , y\in \overline{S}} \pi_x P_{x,y} + \sum_{x\in \overline{S} ,y \in S} \pi_x P_{x,y} &= \sum_{x\in S , y\in \overline{S}} \psi^*_x G_{x,y} \psi_y + \sum_{x\in \overline{S} ,y \in S} \psi^*_x G_{x,y} \psi_y\\
&= \sum_{x\in S , y\in \overline{S}} \psi^*_x G_{x,y} \psi_y + \left( \sum_{x\in \overline{S} ,y \in S} \psi_x G^*_{x,y} \psi^*_y \right)^*\\
&= \sum_{x\in S , y\in \overline{S}} \psi^*_x G_{x,y} \psi_y +( \psi^*_x G_{x,y} \psi_y )^* \\
&= 2\Re\left[\langle \psi | 1_S G 1_{\overline{S} } | \psi\rangle \right]\label{eq:numnum}
\end{align}
where the $\Re$ in the final line denotes the real part of the enclosed expression.  

Note that $\bra{\psi}\mathbbm{1}_{S }G \mathbbm{1}_{S^c}\ket{\psi}$ is real and non-negative since
\begin{align}
\bra{\psi}\mathbbm{1}_{S} G \mathbbm{1}_{S^c}\ket{\psi} = \bra{\psi}\mathbbm{1}_{S} G \ket{\psi}) - \bra{\psi}\mathbbm{1}_{S} G |\mathbbm{1}_{S}\ket{\psi} =  \pi(S) - \langle \psi | 1_S G 1_S \rangle \geq \pi(S) - \|G\|\pi(S) = 0,\label{eq:conductancePositive}
\end{align}
and so combining \eqref{eq:varianceeval} with \eqref{eq:numnum} and \eqref{eq:dirichletFinal} yields
\be
\Delta_P \leq \min_{\substack{S \subseteq \Omega\\ 0 < \pi(S) < 1}} 2\frac{\langle \psi | 1_S G 1_{\overline{S}} |\psi\rangle}{\pi(S)\pi(\overline{S})}\label{eq:thirty}
\ee
Throughout this and the two preceeding sections we have been restricting the analysis to the subspace $\Omega \subseteq \mathcal{B}$ of basis vectors on which the ground state wave function is non-zero.  Let $A \subseteq \mathcal{B}$ be an arbitrary subset and define $S := A\cap \Omega$.  If $0 < \pi(A) < 1$,
\be
 \frac{\langle \psi | 1_A G 1_{\mathcal{B} - A} |\psi\rangle}{\pi(A)\pi(\mathcal{B} - A)}=\frac{\langle \psi | 1_{A\cap \Omega} G 1_{(\mathcal{B}-A)\cap \Omega} |\psi\rangle}{\pi(A \cap \Omega)\pi((\mathcal{B} - A)\cap \Omega)} .
\ee 
Now since $(\mathcal{B} - A)\cap \Omega  = \Omega - (A\cap \Omega)$,
\be
 \frac{\langle \psi | 1_{A\cap\Omega} G 1_{\mathcal{B} - A} |\psi\rangle}{\pi(A)\pi(\mathcal{B} - A)} =  \frac{\langle \psi | 1_A G 1_{\Omega - (A\cap \Omega)} |\psi\rangle}{\pi(A\cap\Omega)\pi(\Omega - (A\cap\Omega))} \geq \Delta_P
\ee
where the last step follows from \eqref{eq:thirty} because $S\subset \Omega$ with $0 < \pi(S) < 1$, and $\overline{S} = \Omega - A \cap \Omega$.  Now since $\Delta_P \geq \Delta_G$ we have shown
\begin{theorem}Under the same conditions of theorem \ref{theo:main} we have\label{theo:conductanceUB}
\begin{equation}
\Delta_H \leq \min_{\substack{S \subset \mathcal{B}\\ 0 < \pi(S) < 1}} \frac{-\langle \psi | \mathbbm{1}_S H \mathbbm{1}_{\bar{S}} |\psi\rangle}{\pi(S) \pi(\overline{S})}. \label{eq:conductanceUB}
\end{equation}
\end{theorem}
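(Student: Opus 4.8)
The plan is to assemble three facts established above: by \eqref{eq:gap} it suffices to bound $\Delta_P$, since $\Delta_H \le (\|H\|-E)\,\Delta_P$; the $\pi$-weighted Dirichlet characterization \eqref{eq:dirichletFinal} converts any $\pi$-mean-zero test function into an upper bound on $\Delta_P$; and the definition \eqref{eq:G} of $G$ lets one pass between the conductance written with $G$ and the same quantity written with $H$.

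Concretely, given $S\subset\mathcal{B}$ with $0<\pi(S)<1$, set $S':=S\cap\Omega$ and feed the indicator-type function $r^{S'}$ (equal to $-\pi(\Omega\setminus S')$ on $S'$ and to $\pi(S')$ on $\Omega\setminus S'$, so that $\mathbf{E}_\pi(r^{S'})=0$) into \eqref{eq:dirichletFinal}. Using $\pi_x P_{x,y}=\psi_x^* G_{x,y}\psi_y$, the numerator collapses onto the cross terms and equals $2\,\Re[\langle\psi|\mathbbm{1}_{S'}G\mathbbm{1}_{\Omega\setminus S'}|\psi\rangle]$, which by \eqref{eq:conductancePositive} is in fact real and nonnegative — this uses $G\ket{\psi}=\ket{\psi}$ together with $\|G\|\le 1$, the latter a consequence of $E\ge 0$ — so the real-part symbol may be dropped; the denominator is the variance of $r^{S'}$, namely $2\,\pi(S')\,\pi(\Omega\setminus S')$. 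Hence $\Delta_P \le \langle\psi|\mathbbm{1}_{S'}G\mathbbm{1}_{\Omega\setminus S'}|\psi\rangle/(\pi(S')\,\pi(\Omega\setminus S'))$. I then convert to $H$: since $\mathbbm{1}_{S'}\mathbbm{1}_{\Omega\setminus S'}=0$, the identity part of $G=(\|H\|-H)/(\|H\|-E)$ is annihilated, giving $\langle\psi|\mathbbm{1}_{S'}G\mathbbm{1}_{\Omega\setminus S'}|\psi\rangle=-\langle\psi|\mathbbm{1}_{S'}H\mathbbm{1}_{\Omega\setminus S'}|\psi\rangle/(\|H\|-E)$. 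Multiplying through by $\|H\|-E$ via \eqref{eq:gap} yields $\Delta_H \le -\langle\psi|\mathbbm{1}_{S'}H\mathbbm{1}_{\Omega\setminus S'}|\psi\rangle/(\pi(S')\,\pi(\Omega\setminus S'))$.

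The only step requiring care is relating this back to the original set $S$, since \eqref{eq:dirichletFinal} was derived on $\Omega$. Because $\pi$ and $\ket{\psi}$ are both supported on $\Omega$, one has $\pi(S)=\pi(S')$, $\pi(\overline S)=\pi(\Omega\setminus S')$, and $\langle\psi|\mathbbm{1}_S H\mathbbm{1}_{\overline S}|\psi\rangle=\langle\psi|\mathbbm{1}_{S'}H\mathbbm{1}_{\Omega\setminus S'}|\psi\rangle$; moreover $0<\pi(S)<1$ forces $0<\pi(S')<1$, so $S'$ is an admissible subset of $\Omega$ and the two right-hand sides literally agree. Taking the minimum over all such $S$ gives \eqref{eq:conductanceUB}. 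I do not expect a genuine obstacle: all the content sits in the earlier subsections, and the subtlest moment — the reality and nonnegativity of $\langle\psi|\mathbbm{1}_{S'}G\mathbbm{1}_{\Omega\setminus S'}|\psi\rangle$, which is what makes the final bound clean rather than stated with a $\Re$ — is precisely \eqref{eq:conductancePositive}.
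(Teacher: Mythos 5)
Your proposal is correct and follows essentially the same route as the paper's proof in Section \ref{sec:HamiltonianConductance}: plugging the two-valued test function $r^S$ into the Dirichlet characterization \eqref{eq:dirichletFinal}, invoking \eqref{eq:conductancePositive} to drop the real part, and handling the restriction to $\Omega$ by intersecting with the support, exactly as the paper does. (The paper also offers an independent direct variational proof in Section \ref{sec:variational}, but your argument matches its primary derivation.)
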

 To complete the proof of Theorem \ref{theo:main} we first note that
\begin{equation}
\min_{\substack{S \subset \mathcal{B}\\\pi(S)\neq 0}} \frac{\langle \psi | \mathbbm{1}_S G \mathbbm{1}_{S^c} |\psi\rangle}{\pi(S) \pi(S^c)} \leq 2 \min_{\substack{S \subset \mathcal{B}\\ 0< \pi(S) \leq 1/2}} \frac{\langle \psi | \mathbbm{1}_S G \mathbbm{1}_{S^c} |\psi\rangle}{\pi(S)},
\end{equation}
and from \eqref{eq:conductancePositive} we have
\begin{equation}
\langle \psi | \mathbbm{1}_S G \mathbbm{1}_{S^c} |\psi\rangle\leq \pi(\partial S).\label{eq:needtoshowMain}
\end{equation}

\subsection{Direct proof using the variational method}\label{sec:variational}
Define $\ket{\phi}:= \pi(S^c)\mathbbm{1}_{S}\ket{\psi}-\pi(S)\mathbbm{1}_{S^c}\ket{\psi}$ and note that $\braket{\phi}{\psi}=0$.  By the variational principle, \be2\bra{\phi}H\ket{\phi}-2E_0\braket{\phi}{\phi}\geq 2(E_1-E_0)\braket{\phi}{\phi}.\ee   Since $\braket{\psi}{\phi}=0$, it is clear that $\bra{\psi} H\ket{\phi}=0$ as well.  Using the elementary relations $\pi(S)+\pi(S^c)=1$ and $\mathbbm{1}_S+\mathbbm{1}_{S^c}=\mathbbm{1}$ we have that 
\begin{align*}
2\bra{\phi}H\ket{\phi}&=2\bra{\phi}H\ket{\phi}+(\pi(S)-\pi(S^c))\bra{\psi} H\ket{\phi}\\
&=2\bra{\phi}H\ket{\phi}+(\pi(S)-\pi(S^c))\bra{\psi}\mathbbm{1}_S H\ket{\phi}+(\pi(S)-\pi(S^c))\bra{\psi}\mathbbm{1}_{S^c} H\ket{\phi}\\
&=\bra{\phi}H\ket{\phi}+\bra{\phi}H\ket{\phi}+(\pi(S)-\pi(S^c))\bra{\psi}\mathbbm{1}_S H\ket{\phi}+(\pi(S)-\pi(S^c))\bra{\psi}\mathbbm{1}_{S^c} H\ket{\phi}\\
&=\bra{\phi}H\ket{\phi}+\pi(S^c)\bra{\psi}\mathbbm{1}_SH\ket{\phi}-\pi(S)\bra{\psi}\mathbbm{1}_{S^c}H\ket{\phi}\\ &\quad
+(\pi(S)-\pi(S^c))\bra{\psi}\mathbbm{1}_S H\ket{\phi}+(\pi(S)-\pi(S^c))\bra{\psi}\mathbbm{1}_{S^c} H\ket{\phi}\\
&=\bra{\phi}H\ket{\phi}+\pi(S)\bra{\psi}\mathbbm{1}_S H\ket{\phi}-\pi(S^c)\bra{\psi}\mathbbm{1}_{S^c} H\ket{\phi}\\
&=\pi(S^c)\bra{\psi}\mathbbm{1}_SH\ket{\phi}-\pi(S)\bra{\psi}\mathbbm{1}_{S^c}H\ket{\phi}+\pi(S)\bra{\psi}\mathbbm{1}_S H\ket{\phi}-\pi(S^c)\bra{\psi}\mathbbm{1}_{S^c} H\ket{\phi}\\
&=\bra{\psi}\mathbbm{1}_SH\ket{\phi}-\bra{\psi}\mathbbm{1}_{S^c}H\ket{\phi}\\
&=\bra{\psi}\mathbbm{1}_SH(\pi(S^c)\mathbbm{1}_{S}\ket{\psi}-\pi(S)\mathbbm{1}_{S^c}\ket{\psi})-\bra{\psi}\mathbbm{1}_{S^c}H(\pi(S^c)\mathbbm{1}_{S}\ket{\psi}-\pi(S)\mathbbm{1}_{S^c}\ket{\psi})\\
&=\bra{\psi}\mathbbm{1}_SH(\pi(S^c)\mathbbm{1}\ket{\psi}-\mathbbm{1}_{S^c}\ket{\psi})-\bra{\psi}\mathbbm{1}_{S^c}H(\mathbbm{1}_{S}\ket{\psi}-\pi(S)\mathbbm{1}\ket{\psi})\\
&=\pi(S^c)\pi(S)E_0+\pi(S^c)\pi(S)E_0-\bra{\psi}\mathbbm{1}_{S^c}H\mathbbm{1}_{S}\ket{\psi}-\bra{\psi}\mathbbm{1}_{S}H\mathbbm{1}_{S^c}\ket{\psi}\\
&=2E_0\pi(S)\pi(S^c)-2\bra{\psi}\mathbbm{1}_{S}H\mathbbm{1}_{S^c}\ket{\psi}
\end{align*}
where in the last line we use \eqref{eq:conductancePositive}. Next we have
\begin{align*}
2\bra{\phi}H\ket{\phi}-2E_0\braket{\phi}{\phi}&=2E_0\pi(S)\pi(S^c)-2\bra{\psi}\mathbbm{1}_{S}H\mathbbm{1}_{S^c}\ket{\psi}-2E_0\pi(S)\pi(S^c)\\
&=-2\bra{\psi}\mathbbm{1}_{S}H\mathbbm{1}_{S^c}\ket{\psi}
\end{align*}
so that applying the variational principle we obtain
$$-2\bra{\psi}\mathbbm{1}_{S}H\mathbbm{1}_{S^c}\ket{\psi}\geq2(E_1-E_0)\pi(S)\pi(S^c)$$
which can be rearranged to yield
$$\Delta_H \leq \frac{-\bra{\psi}\mathbbm{1}_{S}H\mathbbm{1}_{S^c}\ket{\psi}}{\pi(S)\pi(S^c)}.$$
This establishes the Hamiltonian conductance upper bound in theorem \ref{theo:conductanceUB}, and using the inequality \eqref{eq:needtoshowMain} completes the proof of \ref{theo:main}.  

\subsection{Generalization to higher eigenvalues}
In this section we complete the proof of theorem \ref{theo:main} by using the Courant-Fischer min-max characterization~\cite{horn2012matrix} of the higher eigenvalues of the self-adjoint operator $P$ with respect to the $\langle \cdot , \cdot \rangle_\pi$ inner product.  
\begin{lemma}The eigenvalues $1-p_k$ of $I-P$ can be expressed as a minimum over $(k+1)$-dimensional subspaces $U\subset \mathbb{C}^\Omega$,\label{lem:courant} 
\begin{equation}
1 - p_k = \min_{\substack{U \subset \mathbb{C}^\Omega\\ \textrm{dim}(U) = k+1}} \; \; \max_{\substack{g \in U\\g\neq 0}} \frac{\langle g , (I-P) g\rangle_\pi}{\langle g , g\rangle_\pi}.
\end{equation}
\end{lemma}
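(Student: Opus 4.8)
The plan is to recognize Lemma~\ref{lem:courant} as the standard Courant--Fischer min--max theorem, which applies verbatim once we note that $I-P$ is self-adjoint on the finite-dimensional complex inner product space $L^2(\Omega,\pi)$. We have already established above that $P$ is self-adjoint with respect to $\langle\cdot,\cdot\rangle_\pi$, that $P$ and $\mathbbm{1}_\Omega G\mathbbm{1}_\Omega$ are related by a similarity transformation and hence share the eigenvalues $p_0\ge p_1\ge\cdots\ge p_{|\Omega|-1}\ge 0$ (so $p_k=g_{\Omega,k}$), and that the images $\{f^k\}_{k=0}^{|\Omega|-1}$ of the orthonormal eigenbasis $\{|\psi^{\Omega,k}\rangle\}$ of $\mathbbm{1}_\Omega G\mathbbm{1}_\Omega$ under the similarity transformation satisfy $Pf^k = p_k f^k$ and $\langle f^k, f^{k'}\rangle_\pi = \delta_{k,k'}$, forming a basis of $L^2(\Omega,\pi)$ by \eqref{eq:weighteddecomp}. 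Consequently $I-P$ is self-adjoint with the same eigenbasis $\{f^k\}$ and eigenvalues $1-p_k$, now in \emph{increasing} order $1-p_0\le 1-p_1\le\cdots\le 1-p_{|\Omega|-1}$, and the lemma is exactly the Courant--Fischer characterization of the $(k+1)$-st smallest eigenvalue.

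For the $\le$ direction I would exhibit the particular subspace $U_k := \operatorname{span}\{f^0,\ldots,f^k\}$, which has dimension $k+1$. Writing any $g\in U_k$ as $g = \sum_{j=0}^k c_j f^j$ and using orthonormality in $L^2(\Omega,\pi)$ together with $(I-P)f^j = (1-p_j)f^j$, one obtains $\langle g,(I-P)g\rangle_\pi = \sum_{j=0}^k |c_j|^2(1-p_j) \le (1-p_k)\sum_{j=0}^k |c_j|^2 = (1-p_k)\langle g,g\rangle_\pi$, so the inner maximum over $U_k$ is at most $1-p_k$; taking the outer minimum over all $(k+1)$-dimensional subspaces can only decrease this.

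For the $\ge$ direction, let $U$ be an arbitrary $(k+1)$-dimensional subspace of $\mathbb{C}^\Omega$ and set $W := \operatorname{span}\{f^k, f^{k+1},\ldots,f^{|\Omega|-1}\}$, which has dimension $|\Omega|-k$. Since $\dim U + \dim W = |\Omega|+1 > |\Omega|$, the intersection $U\cap W$ contains a nonzero vector $g = \sum_{j=k}^{|\Omega|-1} c_j f^j$; expanding as before gives $\langle g,(I-P)g\rangle_\pi = \sum_{j\ge k} |c_j|^2(1-p_j) \ge (1-p_k)\langle g,g\rangle_\pi$, so the inner maximum over $U$ is at least $1-p_k$ for every such $U$, and therefore so is the outer minimum. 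Combining the two directions yields the claimed equality.

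There is no genuine obstacle here --- the argument is the textbook proof of Courant--Fischer --- and the only points needing a little care are the bookkeeping of the eigenvalue ordering (the $p_k$ decrease, so the $1-p_k$ increase, which is why the minimum is over $(k+1)$-dimensional rather than codimension-$k$ subspaces) and the dimension-counting step producing a nonzero vector in $U\cap W$, which relies on the self-adjointness of $P$ in $L^2(\Omega,\pi)$ through the existence of the full orthonormal eigenbasis $\{f^k\}$.
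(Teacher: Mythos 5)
Your proof is correct and matches the paper's approach: the paper simply invokes the Courant--Fischer min--max theorem for the operator $I-P$, which is self-adjoint on $L^2(\Omega,\pi)$, and you have supplied the standard textbook proof of that theorem in this setting, with the eigenvalue ordering and the dimension-counting step handled correctly. No issues.
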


Applying lemma \ref{lem:courant} to the subspace $U$ spanned by a set of $k+1$ pairwise disjointly supported functions $g^0,...,g^k$ yields
\begin{equation}
1 - p_k \leq \max_{g\neq 0} \left \{ \frac{\langle g , (I - P) g\rangle_\pi}{\langle g , g \rangle_\pi} :g\in \textrm{span}\left(g^0,...,g^k\right) \right\}.
\end{equation}
If we further suppose that $g^0 ,..., g^k$ are supported on pairwise isolated sets, then for any $g = \sum_{i=0}^k \alpha_i g^i$ we have
\ba
\langle g , (I - P) g\rangle_\pi &= \frac{1}{2} \sum_{x,y \in \Omega} \pi_x P_{x,y}|g_x - g_y|^2\\
&= \frac{1}{2} \sum_{i=0}^k |\alpha_i|^2 \sum_{x,y \in \Omega} \pi_x P_{x,y}|g^i_x - g^i_y|^2.\label{eq:dirichletsubi}
\ea
Expressing the denominator as a variance and using the fact that all the terms in the sum are non-negative,
\be
\langle g, g \rangle_\pi =\frac{1}{2}\sum_{x,y\in\Omega} \pi_x \pi_y |g_x - g_y|^2 \geq \frac{1}{2}\sum_{i = 0}^k |\alpha_i|^2\sum_{x,y\in\Omega} \pi_x \pi_y |g^i_x - g^i_y|^2\label{eq:variancesubi}
\ee
Combining \eqref{eq:dirichletsubi} and \eqref{eq:variancesubi} yields
\ba
1 - p_k &\leq \max_{\alpha_0,...,\alpha_k} \frac{\sum_{i=0}^k |\alpha_i|^2 \sum_{x,y \in \Omega} \pi_x P_{x,y}|g^i_x - g^i_y|^2}{\sum_{i = 0}^k |\alpha_i|^2 \sum_{x,y\in\Omega} \pi_x \pi_y |g^i_x - g^i_y|^2}\\
 &\leq \max_{\substack{\alpha_0,...,\alpha_k\\\sum_{i=0}^k |\alpha_i|^2 = 1}} \; \; \sum_{i=0}^k |\alpha_i|^2\left( \frac{ \sum_{x,y \in \Omega} \pi_x P_{x,y}|g^i_x - g^i_y|^2}{ \sum_{x,y\in\Omega} \pi_x \pi_y |g^i_x - g^i_y|^2}\right)\\
&\leq \max_{i = 0,...,k}  \frac{\sum_{x,y \in \Omega} \pi_x P_{x,y}|g^i_x - g^i_y|^2}{\sum_{x,y\in\Omega} \pi_x \pi_y |g^i_x - g^i_y|^2}\\
&= \max_{i = 0,..., k} \frac{\langle g^i , (I-P) g^i\rangle_\pi}{\langle g^i , g^i\rangle_\pi}.
\ea
Now for the special case in which $g^i = \mathbbm{1}_{S_i}$ is the indicator function for a subset $S_i$ we can apply the same steps used in sections \ref{sec:HamiltonianConductance} to obtain
\ba
1 - p_k &\leq \max_i \frac{\langle \psi | \mathbbm{1}_{S_i} G \mathbbm{1}_{\overline{S}_i}| \psi \rangle}{\pi(S_i)\pi(\overline{S}_i)} \label{eq:excitedconductance},
\ea
and applying the definition of $G$ in terms of $H$,
\be
E_k  - E_0 \leq \max_i \frac{-\langle \psi | \mathbbm{1}_{S_i} H \mathbbm{1}_{\overline{S}_i}| \psi \rangle}{\pi(S_i)\pi(\overline{S}_i)}
\ee
and so assuming $\pi(S_i) \leq 1/2$ for all $i$ this becomes
\ba
E_k  - E_0  &\leq 2(\|H\| - E)\max_{i}  \frac{\pi(\partial S_i)}{\pi(S_i)},\label{eq:excitedConductance}
\ea
which completes the proof of \eqref{eq:multiexpansion}.  Note that this proof is substantially based on the one given in~\cite{lee2014multiway} for unweighted graph Laplacians, with the main difference being that the stronger stipulation that the subsets $S_i$ be isolated was necessary to obtain \eqref{eq:dirichletsubi}.  

\section{Examples and applications}\label{sec:examples}

\paragraph{Ferromagnetic Transverse Ising Model.}
By tuning the transverse field of the ferromagnetic quantum Ising model we can demonstrate two extremes of the inequality \eqref{eq:expansion}.  The Hamiltonian is
\be
H = -\Gamma \sum_{i=1}^n   X_i-  \alpha \sum_{i=1}^n Z_i Z_{i+1},\label{eq:tim}
\ee
First we establish that the bound \eqref{eq:expansion} is asymptotically tight when $\alpha= 0$ and $\Gamma \neq 0$, which shows that the factor of $\|H\|$ in the bound is necessary.  This follows because the ground state of $H$ in the computational basis is the uniform superposition, so the set $S$ of basis states with Hamming distance of $k$ or less contains $|S| = 1 + \binom{n}{1} + ... + \binom{n}{k}$ points, while the interior boundary has size $|\partial S| = \binom{n}{k}$.  Therefore $\pi(\partial S)/\pi(S) = |\partial S| / |S|$ is $\mathcal{O}(k^{-1})$, and since $\|H\| = n$ and $\Delta_H = 1$ in this case taking $k = \lfloor n/2 -1 \rfloor$ shows that \eqref{eq:expansion} is asymptotically tight.\footnote{This example is well known in the context of Markov chains since $I - H/\|H\|$ is the transition matrix for a uniform random walk on the hypercube.  We present it here to emphasize that the factor of $\|H\|$ is necessary in general when dealing with Hamiltonians.}

Having seen that the paramagnetic phase where $\Delta_H$ is constant is associated with a ground state that is highly expanding in the computational basis, we next turn to the ferromagnetic phase when the gap between the ground state and first excited state is exponentially small.  Again we let $\mathcal{B} = \{0,1\}^n$ be the computational basis, and we define $S$ to be the set of computational basis states with a negative expectation value of the magnetization operator $M := \sum_{i=1}^n Z_i$,
\be
S = \{z \in \{0,1\}^n : \langle z | M | z\rangle < 0\}.
\ee
Since $z,z'\in \mathcal{B}$ with $z \neq z'$ have $\langle z | H | z'\rangle \neq 0$ iff $|\langle z | M | z \rangle - \langle z' | M | z'\rangle | = 1$, the interior boundary of $S$ is 
\be
\partial S = \{ z \in \{0,1\}^n : \langle z | M | z\rangle = -1 \}.
\ee
From the symmetry of $H$ we know that $\pi(S)$ is $\Omega(1)$, and in the ferromagnetic phase the probability of measuring the magnetization of the system to be near zero is $\mathcal{O}(e^{-n})$.  This implies that $\pi(\partial S)$ is $\mathcal{O}(e^{-n})$, which demonstrates that \eqref{eq:expansion} can be used to show that the spectral gap of this system is exponentially small.

\paragraph{Ground state close to a GHZ state.}
The $n$ qubit GHZ state exhibits long range entanglement,
$$
|\psi_{GHZ}\rangle = \frac{1}{\sqrt{2}} \left( |00...0\rangle + |11...1\rangle \right).
$$
Let $H$ be a $k$-local Hamiltonian with a ground state $|\psi\rangle$ that is close to $|\psi_{GHZ}\rangle$, 
\be
\operatorname{tr}\left[ |\psi\rangle\langle \psi| - |\psi_{GHZ}\rangle \langle \psi_{GHZ}| \right] < \epsilon\label{eq:close}
\ee
Choose the basis $\mathscr{B} = \{0,1\}^n$ and set $\pi(S) = \sum_{x\in S} \|\psi_x\|^2$ for $S\subseteq{B}$, then \eqref{eq:close} implies
$$
\pi(\mathscr{B} - \{00...0 , 11...1\}) < \epsilon.
$$
For $x\in \mathscr{B}$ let $|x|$ denote the number of 1s in the string.  Define 
\be
B_r := \{x \in \{0,1\}^n : |x| \leq r\}.
\ee
 Without loss of generality we may have $1/2 - \epsilon < \pi(B_{j k}) \leq 1/2$, where if the upper bound were not satisfied we would center the Hamming balls on 11...1  instead of 00...0.  
 
Observe that the interior boundary of $B_{2k}$ is disjoint from $B_k$ , $\partial B_{2 k} \cap B_k = \O$.  This pattern generalizes, so for $j,j' = 1 , ... , \lfloor n/k\rfloor$ we have $j \neq j' \Rightarrow \partial B_{j k} \cap B_{j' k} = \O$. This disjointness implies
 \begin{align}\label{eq:ghzbound}
\Delta_H &\leq 2 (\|H\| - E)\min \left \{\frac{\pi\left(\partial B_{j k}\right)}{\pi(B_{j k})} : j = 1,..., \lfloor n/k \rfloor \right \}\\
& \leq \frac{4 \epsilon (\|H\| - E)}{\lfloor n/k\rfloor (1 - \epsilon)},
\end{align}
and so any local Hamiltonian which has a ground state within $\epsilon$ of the GHZ state has an $\mathcal{O}(\epsilon)$ spectral gap.  

\paragraph{Inherently gapless ground states.}
In condensed matter physics one distinguishes between ``gapped'' and ``gapless'' systems, and this distinction does not necessarily depend on the spectral gap $\Delta_H := E_1 - E_0$ between the ground state and the first excited state.  A family of Hamiltonians is called gapped if in the limit of an infinite system size the ground state has a finite degeneracy, and if there is a finite gap $\Delta$ to the energy excitations above this degenerate ground state~\cite{zeng2015quantum}.  If the system is not gapped then it is called gapless.  

For example, the ferromagnetic transverse Ising model \eqref{eq:tim} is known to be gapped when $\alpha =1$ and $\Gamma \neq 1$, despite the fact that the spectral gap $\Delta_H$ is exponentially small in the system size when $\Gamma < 1$.  This exponentially small splitting goes to zero in the thermodynamic limit, and though the ground state is then doubly degenerate it turns out that there is always a constant gap to the third energy eigenstate when $\Gamma \neq 1$.  

Ground states of gapped systems have exponentially decaying correlation functions~\cite{hastings2006spectral},  and are strongly believed to have only area law scaling of entanglement (although this is only known rigorously for 1D systems~\cite{hastings2007area}).  From properties of this sort there is a tendency to speak of ground states themselves as gapped or gapless, although strictly speaking gapped vs gapless is a property belonging to the parent Hamiltonians which have these as ground states.  Here we note that a consequence of the bound \eqref{eq:multiexpansion} is that in principle it enables one to show that a ground state is intrinsically gapless with respect to any local Hamiltonian by identifying a number of localized probability clusters in some basis such that the number of clusters scales with the system size.  

\paragraph{History-state Hamiltonians are inherently gapless.}\footnote{See \cite{cubittCritical} for a more general proof of this fact as well as additional results related to the low-energy spectrum of history-state Hamiltonians.  That work inspired the application of our results to this example.}
In the field of Hamiltonian complexity, Kitaev~\cite{kitaev2002classical} repurposed a Hamiltonian that was originally proposed by Feynman~\cite{feynman1986quantum} in order to show that quantum circuits can be mapped onto ground states of local Hamiltonians.  Given a quantum circuit specified by local unitary gates $U_1,...,U_T$, the idea is to entangle the successive time steps of the circuit with an ancillary register called the time register in order to produce states of the form,
\be
|\psi_{\textrm{hist}}\rangle = \frac{1}{\sqrt{T+1}}\sum_{t = 0}^T U_t ... U_0 |0^n\rangle|t\rangle.\label{eq:historystate}
\ee
These states are called history states because they contain each step in the history of the quantum circuit in a superposition.  A frustration-free Hamiltonian with this state as its unique ground state is ${H:= \left( \sum_{i = 1}^n |1\rangle \langle 1|_i \right) \otimes |0\rangle \langle 0| + \sum_{t=0}^T H_{\textrm{circuit}}(t)}$, with
\ba
H_{\textrm{circuit}}(t) &:= \frac{1}{2}\left (I\otimes |t\rangle \langle t |  + I\otimes |t -1 \rangle \langle t - 1|  - U_t \otimes |t \rangle \langle t- 1| t  - U^\dagger_t\otimes |t - 1\rangle \langle t | \right).\label{eq:hcircuit}
\ea
If the size of the circuit scales polynomially in $n$, then the local terms of the form \eqref{eq:hcircuit} are $\log(n)$-local.  By using time registers encoded in unary along with perturbative gadgets to reduce the locality of the interactions, a large number of works in Hamiltonian complexity have been devoted to proving that history states of the form \eqref{eq:historystate} can appear as ground states of Hamiltonians with few-body interactions restricted to be geometrically local on low-dimensional lattices.  This motivates one to consider the spectral properties inherent to all local Hamiltonians which have ground states that are effectively of the form \eqref{eq:historystate}.  Suppose we consider generalized history state Hamiltonians $H$ with ground states of the form \eqref{eq:historystate} that do not only connect time steps which are nearest-neighbors, but also contain terms connecting time steps that are further apart?  Recent examples of such generalizations include uniform distributions on more general graphs~\cite{bausch2016complexity} as well as history states with modified amplitudes~\cite{Bausch2018analysislimitations}.   We consider the class of history state Hamiltonians with ground states of the form \eqref{eq:historystate} which are ``temporally $k$-local'', i.e. if $|\psi_t\rangle := U_t ... U_0 |0^n\rangle$ then
$$
\langle t| \langle \psi_t|  H | \psi_{t'}\rangle |t'\rangle = 0 \textrm{ if } |t-t'| > k.  
$$  
for some $k$ that is $\mathcal{O}(1)$.  We assume $H = \sum_{t = 0}^T H_t$ with $\max_{t} \|H_t\| = \mathcal{O}(1)$, but the $H_t$ are not assumed to be projectors as in the standard construction. The restriction that the Hamiltonian contains $T+1$ local terms is without loss of generality for temporally $k$-local Hamiltonians, since the form of \eqref{eq:historystate} means there are no additional qubits besides those representing the computation and the clock, and $k$ can be chosen large enough to include all of the local terms that act on time step $t$. 

To apply our methods we consider the set of basis labels $\mathcal{B} = \{0,...,T\}\times \{0,1\}^n$.  Define the subset $S = \{1,...,T/4\}\times \{0,1\}^n$, which for the ground state \eqref{eq:historystate} satisfies $\pi(S) = 1/4$.   The numerator of the Hamiltonian conductance in theorem \ref{theo:conductanceUB} satisfies
$$
-\langle \psi_{\textrm{hist}} | \mathbbm{1}_S H \mathbbm{1}_{\bar{S}} | \psi_{\textrm{hist}}\rangle \leq \left(\max_{t} \|H_t\| \right)\sum_{t = \frac{T}{4} - k}^{\frac{T}{4}} \sum_{t' = \frac{T}{4} +1}^{\frac{T}{4} + k}|\langle \psi_{\textrm{hist}}|t\rangle|  |\langle t' | \psi_{\textrm{hist}}\rangle| = \mathcal{O}(T^{-1}),
$$
and therefore by theorem \ref{theo:conductanceUB} we have $\Delta_H = \mathcal{O}(T^{-1})$.  This shows that for any temporally $k$-local history state Hamiltonian with a ground state of the form \eqref{eq:historystate}, and suitably normalized local terms, necessarily has an excited state within energy $\mathcal{O}(T^{-1})$ of the ground state.  This argument can also be extended to produce additional excited states by taking 
$$S_i = \{i T^{1/4} + 1, ... , (i+1) T^{1/4} -1\} \times \{0,1\}^n \textrm{ for } i = 1,...,T^{3/4}.$$  This yields $\pi(S_i) = \Omega(T^{1/4})$ for each $i$, all of which satisfy $-\langle \psi_{\textrm{hist}} | \mathbbm{1}_{S_i} H \mathbbm{1}_{\bar{S_i}} | \psi_{\textrm{hist}}\rangle = \mathcal{O}(T^{-1})$, and so by \eqref{eq:excitedconductance} this shows that $H$ has $\Omega(T^{3/4})$ excited states with energies that are $\mathcal{O}(T^{-3/4})$, and therefore temporally local Hamiltonians with ground states of the form \eqref{eq:historystate} are inherently gapless.

\paragraph{Exponentially small gap in the generalized Motzkin chain.}
Motzkin spin chains are a class of analytically tractable class of 1D Hamiltonians which are intringuing because they posess long range entanglement, even though they are frustration-free local Hamiltonians.   Originally proposed as a spin $s = 1$ model with an inverse polynomial scaling of the spectral gap and entanglement entropy for the half chain region scaling like $S = \mathcal{O}(\log(n))$~\cite{bravyi2012criticality}, the so-called colored Motzkin models with $s > 1$ were later found to yield power law entanglement $S = \mathcal{O}(\sqrt{n} \log s)$~\cite{movassagh2016supercritical}, and recently these models have been further generalized~\cite{zhang2016quantum} to produce volume entanglement scaling $S = \mathcal{O}(n \log(s))$ for the half chain region.

Following the introduction of the generalized Motzkin ground states with extensive entanglement entropy it was shown that these chains have an exponentially small spectral gap~\cite{levine2016gap}.  The proof uses the fact that these spin chains are stoquastic and frustration-free, and so by the quantum-to-classical mapping they correspond to Markov chains and the standard vertex expansion upper bound for the spectral gap of a Markov chain is applied to show that the Hamiltonian spectral gap is exponentially small.  Our reason for reviewing this work here is to point out that the mapping to a Markov chain is not necessary when one only seeks an upper bound on the gap using the vertex expansion of the ground state probability distribution.  In fact, by using the properties of the ground state probability distribution that are established in section D of \cite{levine2016gap} together with our Theorem \ref{theo:main} one sees that any local Hamiltonian (not necessarily stoquastic, or frustration-free) whose ground state probability distribution matches that of the generalized Motzkin chains necessarily has an exponentially small gap.
\section{Irreducible stoquastic Hamiltonians}\label{sec:stoquastic}
In this section we will assume that $H$ is a stoquastic Hamiltonian in the basis $\mathcal{B}$, and that the matrix form of $H$ in this basis is irreducible.  A symmetric matrix is irreducible if and only if replacing the non-zero entries of the matrix with 1 yields the adjacency matrix of a connected graph.   Define $\Gamma$ to be the magnitude of the least non-zero off-diagonal matrix element of $H$,
\be
\Gamma = \min_{z,z' \in \mathcal{B} : \; z \neq z' \; \textrm{and} \; H_{z,z'} \neq 0} |\langle z | H | z'\rangle|.\label{eq:Gamma}
\ee
The main result of this section is a vertex expansion lower bound for the spectral gap that is useful when $\Gamma$ is at most inverse polynomially small.  An important class of systems that these bounds apply to are adiabatic optimization Hamiltonians with a transverse field driver term,
\be
H(s) = (1-s)\sum_{i =1}^n \frac{1}{2}\left(I - X_i\right) \; + \; s \sum_{z \in \{0,1\}^n} f(z) |z\rangle \langle z|.
\ee
If we assume $\Delta_{\min} = \min_{0\leq s < 1} \Delta_{H(s)}$ is at most inverse polynomially small, then the annealing protocol can conclude at $s^* = 1 - 1/\textrm{poly}(n)$ since the ground state $\psi(s^*)$ will have a high overlap with the final state $\psi(1)$.  One may also apply these results to adiabatic optimization with other irreducible stoquastic driver terms besides the transverse field (including any driver Hamiltonians which have the uniform transverse field as a subset of the Hamiltonian terms).  

We begin by defining $G$ and $P$ as before,
\begin{equation}
G = \frac{\|H\| - H}{\|H\| - E} \quad , \quad P = \sum_{x,y\in \Omega} \frac{\psi_y}{\psi_x} G_{xy} |x\rangle \langle y|,
\end{equation}
and now since $H$ is stoquastic we have that $G$ is a matrix with non-negative entries in the basis $\mathcal{B}$.  Since $|\psi\rangle$ is the principal eigenvector of $G$ it is also guaranteed to have non-negative components by the Perron-Frobenius theorem, and so $P_{xy} \geq 0$ for all $x,y\in \Omega$.  Since we have already verified that $\sum_{y \in \Omega}P_{xy} = 1$ for all $x \in \Omega$, we that $P$ is a stochastic matrix in the computational basis.\footnote{
Although $P$ is always a Markov chain transition matrix when $H$ is stoquastic, the transition probabilities depend ratios of the form $\psi_y / \psi_x$, and so the Markov chain cannot be efficiently simulated on a classical computer unless these ratios are known to high precision.  For the special case of frustration-free stoquastic Hamiltonians there is an efficient method for computing these ratios, which was used to obtain a polynomial time classical simulation algorithm for frustration-free stoquastic adiabatic computation~\cite{bravyi2009complexity}.  The fact that $P$ still defines a Markov chain when $H$ is frustrated was noted in~\cite{bravyi2009complexity}, but limitations on generalizing this idea are also known~\cite{somma2013spectral}. However, see \cite{aharonov2008adiabatic, bausch2016increasing} for examples in which this mapping was applied to frustrated Hamiltonians in order to analyze the spectral gap using Markov chain methods.}  Furthermore, a variant of the Perron-Frobenius theorem for irreducible non-negative matrices implies that $|\psi\rangle$ has all positive amplitudes in the chosen basis.  To see this suppose that $\Omega \subset \mathcal{B}$, then since $G$ is irreducible there must be some pair $(x,y)$ with $x\in \Omega - \mathcal{B}$ and $y \in \Omega$ with $G_{x,y} > 0$.  But this is a contradiction because
\be
\langle x | G | \psi\rangle = \sum_{y'\in \Omega} G_{x,y'}\psi_{y'},
\ee 
and since $\psi_{y'} > 0$ for all $y' \in \Omega$ this implies $\langle x | G | \psi \rangle = E \langle x | \psi\rangle > 0$.
From \eqref{eq:Gamma} we have that the non-zero off-diagonal elements of $G$ are no smaller than $\Gamma/\|H\|$.  This is not enough to lower bound the transition probability $P_{x,y}$, however, because so far we do not have any lower bound on the ratio $\psi_y / \psi_x$.  To bound this ratio note that
\be
\sum_{x'\in \Omega} P_{y,x'} = 1 \; \textrm{and} \; P_{y,x'} > 0 \implies P_{y,x} \leq 1,
\ee 
and since $P_{y,x} = \psi_x G_{x,y} \psi^{-1}_y \geq \psi_x \psi^{-1}_y \Gamma/\|H\|$, we must have $\psi_x/\psi_y \leq \|H\|/\Gamma$, which can be rearranged to yield
\be
\frac{\psi_y}{\psi_x} \geq \frac{\Gamma}{\|H\|} \implies P_{x,y} \geq \left(\frac{\Gamma}{\|H\|}\right)^2, \label{eq:lipschitz}
\ee
which holds for all $x,y$ with $H_{x,y} \neq 0$.  

Applying Cheeger's inequality for Markov chains to $P$, we have
\be
\frac{\Phi^2}{2} \leq \Delta_P \quad , \quad \Phi := \min_{\substack{S \subseteq \Omega\\0 < \pi(S) \leq 1/2}} \; \;\frac{1}{\pi(S)} \sum_{x\in S , y \in \overline{S}} \pi_x P_{x,y}. \label{eq:markovCheeger}
\ee
Using \eqref{eq:lipschitz} we have
\be
 \sum_{x\in S , y \in \overline{S}} \pi_x P_{x,y} = \sum_{x \in \partial S ,y \in \overline{S}} \pi_x P_{x,y} \geq \left(\frac{\Gamma}{\|H\|}\right)^2 \sum_{x \in \partial S} \pi_x
\ee
and so $\Phi(S) \geq \left(\frac{\Gamma}{\|H\|}\right)^2 \pi(\partial S)/\pi(S)$ for all $S\subseteq \Omega$.  Now since 
$$
\Delta_P = \Delta_{\mathbbm{1}_\Omega G \mathbbm{1}_\Omega}= \Delta_G = \frac{\Delta_H}{\|H\| - E} , 
$$ 
we can state a version of Cheeger's inequality for irreducible stoquastic Hamiltonians.   
\begin{theorem}\label{thm:stoquastic}
If $H$ is stoquastic and irreducible in the basis $\mathcal{B}$, and all the non-zero off-diagonal matrix elements of $H$ have magnitude at least $\Gamma$, then
\be
\min{\substack{S \subset \mathcal{B} \\ 0 < \pi(S) \leq 1/2}} \; \frac{\Gamma^4}{2\|H\|^4\left(\|H\| - E\right)} \left(\frac{\pi(\partial S)}{\pi(S)}\right)^2 \leq \Delta_H
\ee
\end{theorem}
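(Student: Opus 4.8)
The plan is to reduce to the classical Cheeger inequality for reversible Markov chains by exhibiting the operator $P$ of \eqref{eq:pee} as a genuine stochastic matrix whose transition probabilities admit a uniform lower bound; all the ingredients are essentially present in the preceding discussion, so the work is in assembling them and tracking constants. The strategy is: (i) use stoquasticity and irreducibility to show $P$ is a reversible Markov chain on $\Omega=\mathcal{B}$ with stationary distribution $\pi$; (ii) prove $P_{xy}\geq(\Gamma/\|H\|)^2$ on every edge of $H$; (iii) convert this into a lower bound on the chain conductance in terms of the vertex expansion $\pi(\partial S)/\pi(S)$; (iv) plug into Cheeger's inequality \eqref{eq:markovCheeger} and use $\Delta_P=\Delta_H/(\|H\|-E)$.

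For step (i) I would argue that stoquasticity makes $G=(\|H\|-H)/(\|H\|-E)$ entrywise non-negative, so Perron--Frobenius forces its principal eigenvector $|\psi\rangle$ to have non-negative amplitudes, whence $P_{xy}\geq 0$; combined with the row-sum identity from \eqref{eq:rowstochastic} and property~3 of the list in Section~\ref{sec:HamiltonianConductance} (which in the real stoquastic case is ordinary detailed balance $\pi_xP_{xy}=\pi_yP_{yx}$), $P$ is a reversible stochastic matrix. Irreducibility then upgrades non-negativity of $|\psi\rangle$ to strict positivity: if $\Omega$ were a proper subset of $\mathcal{B}$, connectivity of the support graph of $G$ supplies an edge $(x,y)$ with $x\notin\Omega$, $y\in\Omega$, $G_{xy}>0$, and then $\langle x|G|\psi\rangle=\sum_{y'\in\Omega}G_{xy'}\psi_{y'}>0$ contradicts $\psi_x=0$. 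Hence $\Omega=\mathcal{B}$ and, by the remark following \eqref{eq:gap}, $\Delta_P=\Delta_{\mathbbm{1}_\Omega G\mathbbm{1}_\Omega}=\Delta_G=\Delta_H/(\|H\|-E)$ with no loss.

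The crux is step (ii), the uniform lower bound $P_{xy}\geq(\Gamma/\|H\|)^2$ whenever $H_{xy}\neq 0$. By the definition \eqref{eq:Gamma} the nonzero off-diagonal entries of $G$ are at least $\Gamma/\|H\|$, but since $P_{xy}=(\psi_y/\psi_x)G_{xy}$ one must also control the amplitude ratio, and here the argument bootstraps off $P$ itself: positivity and row-stochasticity give $P_{yx}\leq 1$ on the reversed edge, while $P_{yx}=(\psi_x/\psi_y)G_{xy}\geq(\psi_x/\psi_y)(\Gamma/\|H\|)$, so $\psi_x/\psi_y\leq\|H\|/\Gamma$, i.e. $\psi_y/\psi_x\geq\Gamma/\|H\|$; multiplying by $G_{xy}\geq\Gamma/\|H\|$ gives the bound. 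I expect this amplitude-ratio step to be the one genuinely non-routine point: it is where irreducibility (through strict positivity) and the precise normalization of $G$ are both indispensable, and it has no counterpart in the purely combinatorial Markov-chain setting.

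Finally I would combine. For any $S$ with $0<\pi(S)\leq 1/2$, each $x\in\partial S$ has a neighbor in $\overline{S}$ through a nonzero off-diagonal term, so $\sum_{y\in\overline{S}}P_{xy}\geq(\Gamma/\|H\|)^2$, hence
\[
\sum_{x\in S,\,y\in\overline{S}}\pi_x P_{xy}\;=\;\sum_{x\in\partial S,\,y\in\overline{S}}\pi_x P_{xy}\;\geq\;\left(\frac{\Gamma}{\|H\|}\right)^{2}\pi(\partial S),
\]
so the chain conductance obeys $\Phi(S)\geq(\Gamma/\|H\|)^2\,\pi(\partial S)/\pi(S)$. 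Plugging into \eqref{eq:markovCheeger} and using $\Delta_P=\Delta_H/(\|H\|-E)$,
\[
\frac{\Delta_H}{\|H\|-E}\;=\;\Delta_P\;\geq\;\frac{\Phi^{2}}{2}\;\geq\;\frac{1}{2}\left(\frac{\Gamma}{\|H\|}\right)^{4}\min_{\substack{S\subset\mathcal{B}\\0<\pi(S)\leq 1/2}}\left(\frac{\pi(\partial S)}{\pi(S)}\right)^{2},
\]
which rearranges to the stated bound.
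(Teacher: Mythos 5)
Your proposal is correct and follows essentially the same route as the paper's proof: stoquasticity plus Perron--Frobenius to make $P$ a reversible stochastic matrix, irreducibility to get $\Omega=\mathcal{B}$, the same bootstrap via $P_{yx}\leq 1$ to bound the amplitude ratio and obtain $P_{xy}\geq(\Gamma/\|H\|)^2$, and then the Markov-chain Cheeger inequality with $\Delta_P=\Delta_H/(\|H\|-E)$. Nothing to add.
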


\section{\large Counterexample to a non-stoquastic lower bound}\label{sec:counterexample}
In this section we give a counterexample to the most straightforward generalization of Cheeger's inequality \eqref{eq:markovCheeger} that one might propose for non-stoquastic Hamiltonians\footnote{Note that generalizations of Cheeger's inequality for signed graphs~\cite{gournay2016isoperimetric, martin2017frustration} and general Hermitian matrices~\cite{jarret2018hamiltonian} have been published since the first appearance of this work.  As expected, these inequalities differ from the naive generalization which we give a counterexample against here.}.  We will show that the inequality 
\be
\frac{\Phi^2}{2} \leq \Delta_P \quad , \quad \Phi := \min_{\substack{S \subseteq \Omega\\0 < \pi(S) \leq 1/2}} \; \;\frac{\langle \psi | \mathbbm{1}_S G \mathbbm{1}_{\overline{S}} | \psi\rangle}{\langle \psi | \mathbbm{1}_S | \psi\rangle}, \label{eq:markovCheeger}
\ee
which always holds when $G$ is defined from a stoquastic Hamiltonian, can be violated by an exponentially large factor when $G$ is defined from a non-stoquastic Hamiltonian.

Consider a single particle on a 1-dimensional ring with $2n+1$ sites, labeled $-n,-n+1,\ldots,n$.  The Hamiltonian takes the form of a discrete momentum operator 
\begin{equation}
\bra{i}H_0\ket{j} = \delta_{i,j} + \frac{1}{2}\left( \delta_{i+1,j}+\delta_{i-1,j}\right) + \frac{1}{2}\left(\delta_{n,-n} + \delta_{-n,n} \right)
\end{equation}
for $i,j=-n,\ldots n$.  Its (normalized) eigenstates are momentum eigenstates $\ket{p_j}$ defined by $$\braket{j}{p_k}=\frac{1}{2n+1}\exp\left(2\pi i \frac{jk}{2n+1}\right).$$  Its eigenvalues are $$E_k=1+\cos\left(\frac{2\pi k}{2n+1}\right).$$  

Clearly the two lowest eigenvalues are for values of $k$ that make $2\pi k/(2n+1)$ as close to $\pi$ as possible.  There are two that are equally close to $\pi$, $k=-n,n$.  By symmetry these two energy levels are degenerate and hence $H_0$ has a zero gap.  As a counterexample to the naive lower bound we might hope to hold for nonstoquastic Hamiltonians as well, we will first show that this is in a sense already a counterexample, because the minimum conductance is $\mathcal O(poly(n))$.  Then we will perturb the Hamiltonian to have an exponentially small gap and demonstrate a similar polynomially small conductance.  

Let $\ket{\psi}=\ket{p_n}$ be the ground state of $H_0$.  The conductance of $H_0$ is just 
\begin{align}
-\bra{\psi}\mathbbm{1}_S H_0 \mathbbm{1}_{S^c}\ket{\psi}&=\frac{1}{2n+1}\left(\sum_{j\in S\cap (\bar S+1)}e^{2\pi in \frac{j-1}{2n+1}}e^{-2\pi in \frac{j}{2n+1}}+\sum_{j\in S\cap (\bar S-1)}e^{2\pi in \frac{j+1}{2n+1}}e^{-2\pi in \frac{j}{2n+1}}\right)\\
&=\frac{1}{2n+1}\left(\sum_{j\in S\cap (\bar S+1)}e^{2\pi in \frac{-1}{2n+1}}+\sum_{j\in S\cap (\bar S-1)}e^{2\pi in \frac{1}{2n+1}}\right)\\
&=\frac{1}{2n+1}\left(\abs{S\cap (\bar S+1)}e^{2\pi in \frac{-1}{2n+1}}+\abs{S\cap (\bar S-1)}e^{2\pi in \frac{1}{2n+1}}\right)\\
&=\frac{\cos({2\pi n \frac{1}{2n+1}})}{2n+1}\left(\abs{S\cap (\bar S+1)}+\abs{S\cap (\bar S-1)}\right)
\end{align}
where the last line follows from the conductance being a real quantity (maybe include that brief proof of this fact from before).  It must be minimized when the sizes of the two sets are as small as possible.  Neither can be zero, so the smallest they can both be is 1.  Hence our minimal conductance goes like $\mathcal O(n^{-2})$.  Our gap however is zero.  

In order to give a counterexample with a unique ground state, we add a perturbation $\delta H$ to $H_0$ 
\begin{equation}
\langle i|\delta H| j\rangle = 2^{-n} \left(\delta_{-n,i} \delta_{n-1 ,j}+\delta_{-n,j} \delta_{n-1,i}\right).
\end{equation}
We will compute the matrix elements of $\delta H$ in the ground subspace of $H$, spanned by $\ket{p_n}$ and $\ket{p_{-n}}$.  We can compute for large $n$ that
\ba\bra{p_{-n}}\delta H\ket{p_{n}}&=2^{-n}\left(e^{2\pi i n(n-1)/(2n+1)}e^{2\pi i n^2/(2n+1)}+e^{-2\pi i n^2/(2n+1)}e^{-2\pi i n(n-1)/(2n+1)}\right)\\
&=2^{-n}\cos\left(2\pi\frac{2n^2-1}{2n+1}\right)\\
&
\approx 1\ea
The diagonal elements are
\ba\bra{p_{\pm n}}\delta H\ket{p_{\pm n}}&=2^{-n}\left(e^{\pm 2\pi i (n-1)n/(2n+1)}e^{\pm2\pi i n^2/(2n+1)}+e^{\mp 2\pi i (n-1)n/(2n+1)}e^{\mp 2\pi i n^2/(2n+1)}\right)\\
&=\pm 2^{-n}\cos\left(2\pi\frac{2n^2-1}{2n+1}\right)\ea
Meaning that to zeroth order the gap of $H_0+\delta H$ is $\mathcal O(2^{-n})$.  The perturbation, however, only contributes an exponentially small factor to the conductance.  Hence here the conductance lower bound is still $\mathcal O(n^{-2})$ but the gap is $\mathcal O(2^{-n})$, and so we have exhibited a non-stoquastic Hamiltonian with a unique ground state that violates the bound \eqref{eq:markovCheeger} by an exponentially large multiplicative factor.  
\section{Discussion of adiabatic computing}\label{sec:adiabatic}
A primary motivation for this work is to understand the benefits as well as potential limitations of using non-stoquastic driver Hamiltonians in adiabatic optimization.  General adiabatic computation with non-stoquastic Hamiltonians can implement arbitrary quantum circuits with only polynomial overhead, however in this section we contrast universal adiabatic computation with more restricted notions of adiabatic optimization and ground state sampling that can also be considered.  
\paragraph{Universal adiabatic computation.} First we briefly review universal AQC~\cite{aharonov2007adiabatic}.  The goal is to implement a quantum circuit consisting of local gates $U_0, U_1 , ... , U_T$ acting on an input state that may be taken without loss of generality to be the all zeroes computational basis state $|0^n\rangle$.  Define 
\be
H_{\textrm{init}}:= \left( \sum_{i = 1}^n |1\rangle \langle 1|_i \right)\otimes |0\rangle \langle 0 | \quad , \quad H_{\textrm{final}} := H_{\textrm{init}} +  \sum_{t=0}^T H_{\textrm{circuit}}(t) \quad , \quad
H(s) := s H_\textrm{init} + (1-s) H_{\textrm{final}}. \nonumber
\ee
where $H_{\textrm{circuit}}(t)$ is given in \eqref{eq:hcircuit}.  The ground state of $H(s)$ at $s = 0$ is simple to prepare, and the ground state at $s = 1$ will be the history state of the circuit which is given in \eqref{eq:historystate}.  

In \cite{aharonov2007adiabatic} it was proven that the minimum spectral gap $\Delta_H := \min_{0\leq s \leq 1} \Delta_{H(s)}$ throughout the evolution is $\Omega(T^{-3})$.   Notice that the probability of measuring $|\psi_{\textrm{history}}\rangle$ in the final step $t = T$ of the computation is $\Omega(T^{-1})$.  A standard trick is to pad the end of the circuit with identity gates to increase the probability of measuring the final step of the computation.  If we add $r$ identity gates at the end of the circuit, the probability of measuring $| \psi_{\textrm{history}}\rangle$ to be in one of the last $r$ steps of the computation increases to $\Omega(r/(r+T))$, while the spectral gap only falls to $\Omega\left(1/(r + T)^3\right)$.  

Now we can see how the bound \eqref{eq:expansion} is consistent with ability of universality of adiabatic computation to produce output states with low expansion, e.g. the GHZ state from section \ref{sec:examples}.  The statment that AQC is universal only requires the ground state to have non-neglible $1/\textrm{poly}(n)$ overlap with the output state.  The GHZ state can be produced using $T = \mathcal{O}(\log(n))$ gates, however, if we instead require the ground state of $H(1)$ to have overlap $1 - \epsilon$ on the GHZ state then the circuit must be padded with $r = T / (1 - \epsilon)$ extra gates.  This padding reduces the spectral gap $\Omega\left(1/(r + T)^3\right)$ so that the bound \eqref{eq:ghzbound} remains satisfied.  

To summarize, the Solovay-Kitaev theorem states that arbitrary an arbitrary unitary $U$ can be implemented to precision $\epsilon$ using a number of gates that grows as $\textrm{poly}(\log \epsilon^{-1})$, but using pure adiabatic evolution without measurement there are states which provably take time $\Omega(\epsilon^{-1})$ to produce to accuracy $\epsilon$.   In universal AQC this is resolved by measuring the clock register once the overlap is guaranteed to be sufficiently high, and if the measurement succeeds then the system will be projected into the exact output of the circuit.  

\paragraph{Competing local minima in adiabatic optimization.}
A key difference between universal AQC and the standard form of adiabatic optimization (QAO) is that as $s \to 1$ the latter approaches a Hamiltonian that is diagonal in the computation basis,
\be
H_P  = \sum_{z\in \{0,1\}^n} f(z)|z\rangle\langle z|.
\ee
A commonly cited failure mode of transverse field adiabatic optimization occurs when the cost function $f$ has near degenerate local minima corresponding to computational basis states that are far separated in Hamming distance~\cite{boixo2014evidence}.  In some cases these observations have been rigorously shown to cause transverse field adiabatic optimization to generically encounter exponentially small spectral gaps for certain classes of NP-hard cost functions~\cite{altshuler2009adiabatic, altshuler2010anderson,laumann2015quantum}.  The bound \eqref{eq:expansion} suggests that even with non-stoquastic driver Hamiltonians this mode of failure cannot be fully avoided as long as the ground state probability distribution becomes concentrated on far separated local minima of the cost function as $s \to 1$.  

\paragraph{Example: geometry of a non-stoquastic path change.}
Following the preceding sections we've seen that the success of including ``path change'' terms in the Hamiltonian depends on removing bottlenecks from the ground state probability distribution throughout the adiabatic path.  Here we illustrate this feature for a particular case of a successful non-stoquastic path change in a bit-symmetric toy model~\cite{farhi2002quantumB}.  

This path change was analyzed in~\cite{farhi2002quantumB} using the spin coherent potential, which has degenerate local minima at $s = 0.434...$ without the path change, but has a single global minimum throught the interval $0 \leq s \leq 1$ when the path change term is present.  The low lying spectrum and the ground state probability distribution of the system as a function of Hamming weight (which were not exhibited in \cite{farhi2002quantumB}) with and without the path change  are shown in Figures \ref{fig:spectraPathChange} and \ref{fig:gsPathChange}.  
\begin{figure}[h]
\begin{center}
\includegraphics[scale=1,trim=20 0 0 0]{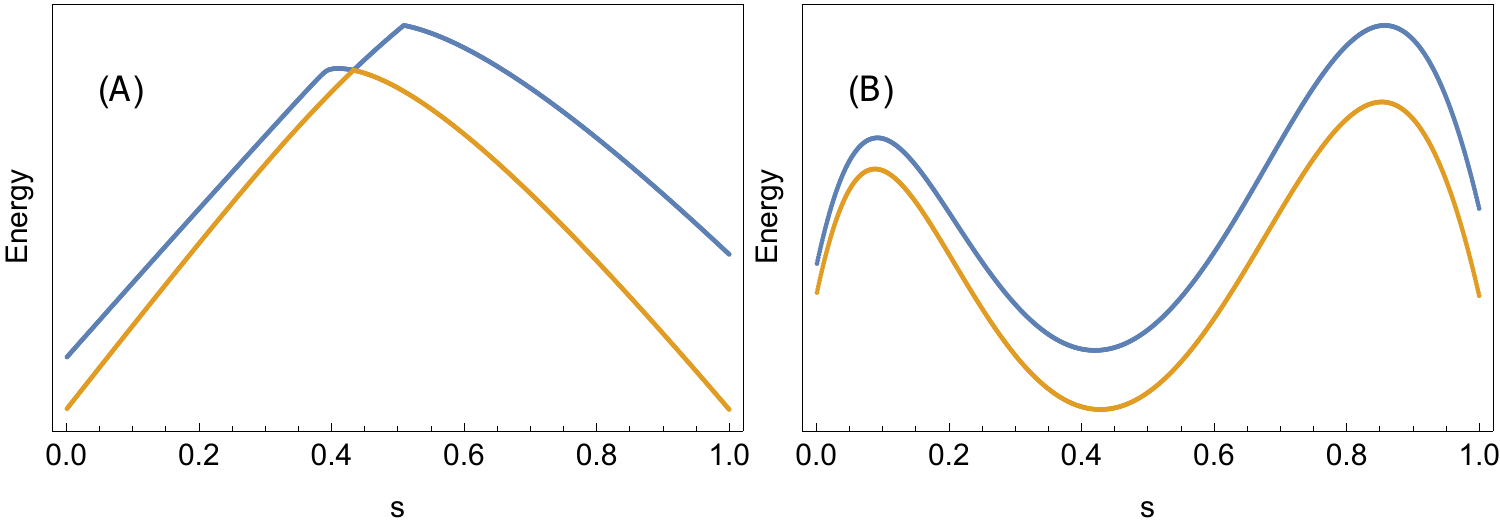}
\end{center}
\caption{\textbf{(A)} The ground state energy and first excited state energy of the Hamiltonian $H(s)$ without the path change.  Note the avoided crossing at $s = 0.434...$ which corresponds to an exponentially small gap. \textbf{(B)} The ground state energy and the first excited state energy of the Hamiltonian $H(s) + s(1-s)H_E$ that includes the path change term.  The spectral gap of the system does not go to zero with the system size anywhere in the interval $0 \leq s \leq 1$.}\label{fig:spectraPathChange}
\end{figure}
\begin{figure}[h]
\begin{center}
\includegraphics[scale=1,trim=20 0 0 0]{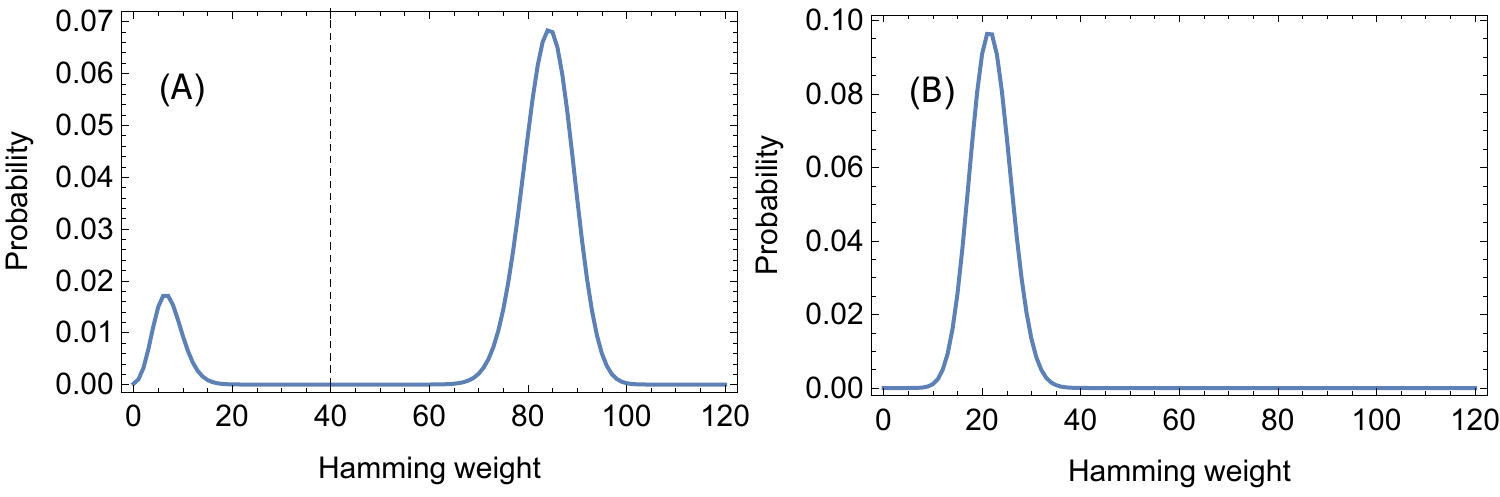}
\end{center}
\caption{\textbf{(A)} The ground state probability distribution of the Hamiltonian $H(s)$ without the path change at the point $s = 0.434...$ where the minimum spectral gap occurs.  The dashed vertical line separates the distribution into two regions with exponentially small vertex expansion, corresponding to the exponentially small spectral gap of the system.  \textbf{(B)} The ground state probability distribution of the Hamiltonian $H(s) + s(1-s)H_E$ that includes the path change term at the point $s = 0.24...$ where the spectral gap is minimized.  There is no way to separate this ground state into regions of low expansion, which corresponds to the fact that the spectral gap does not go to zero with the system size.}\label{fig:gsPathChange}
\end{figure}
\section{Outlook}
The bound \eqref{eq:expansion} can be viewed as a fundamental limitation on the kinds of quantum states that can be efficiently produced using the strict model of adiabatic computation in which the system is idealized to always remain in the ground state of a local Hamiltonian and the run time scales polynomially with the inverse of the minimal spectral gap along the evolution.  In the context of this idealized model, we have shown that adiabatic optimization with local Hamiltonians is efficient only if there are never any small bottlenecks in the ground state probability distribution along the adiabatic path.  Therefore efforts to improve the performance of idealized adiabatic computation using ``path changes'' should focus on how these terms can alter the ground state probability distribution to remove or reduce the effect of geometric bottlenecks.  Towards this effort we note that lack of ground state bottlenecks in the computational basis guarantees a large spectral gap for irreducible stoquastic Hamiltonians such as the transverse Ising model by Theorem \ref{thm:stoquastic}, but such a guarantee does not necessarily hold when a non-stoquastic Hamiltonian is considered in an arbitrary basis according to the counterexample of section \ref{sec:counterexample}.    

Much of the attention on the idealized model of adiabatic computation described above is due to the fact that the adiabatic theorem provides rigorously sufficient conditions for the run time of the algorithm.  These conditions are not necessary however, and a number of works have pointed out that phenomena like diabatic transitions to excited states~\cite{crosson2014different,muthukrishnan2016tunneling} and thermal relaxation~\cite{smelyanskiy2015quantum} can improve the performance of quantum annealing as contrasted with purely adiabatic computation.  Furthermore, the example of universal adiabatic computation shows that measurements can be used to efficiently project the state of an adiabatic computer into a state with strongly localized modes such as a GHZ state.  In the case of universal AQC this involves a clever design of the Hamiltonian, and it would be interesting to investigate how intermediate measurements could improve adiabatic computation more generally.  While non-stoquastic driver Hamiltonians remain an exciting future direction for improving the performance of  adiabatic optimization, our results suggest that future quantum annealers may find an even greater benefit by taking advantage of excited states and non-adiabatic effects.  

\paragraph{Acknowledgements.}
E.\,C. is grateful for support provided by the Institute for Quantum Information and Matter, an NSF Physics Frontiers Center (NSF Grant PHY-1125565) with support of the Gordon and Betty Moore Foundation (GBMF-12500028).  J.B. is grateful for support provided by the Caltech Summer Undergraduate Research Fellowship program, and also thanks the IQIM for hospitality.  We thank T. Cubitt for illuminating discussions that inspired the application of our results to history-state Hamiltonians.

\bibliography{isoperimetric}
\bibliographystyle{abbrv}

\end{document}